\documentclass[sigconf, nonacm]{acmart}

\newif\ifrevisionmode
\revisionmodefalse

\newcommand{\publishedvenue}{Proc. VLDB Endow.}
\newcommand{\publishedvolume}{19}
\newcommand{\publishedissue}{9}
\newcommand{\publishedpages}{2168--2182}
\newcommand{\publishedyear}{2026}
\newcommand{\publisheddoi}{10.14778/3819518.3819542}

\usepackage{graphicx} % Required for inserting images
\usepackage{pgfplots}
\usepackage{pgfplotstable} % For reading and processing table data
\usepackage{tikz} % For drawing graphics
\usetikzlibrary{calc} % Coordinate calculations like ($(a)+(b)$)
\usetikzlibrary{arrows.meta,positioning,calc,fit,shapes.misc,shapes.geometric,matrix}
\usetikzlibrary{external}
\makeatletter
\ifnum\pdfshellescape=1\relax
  \immediate\write18{mkdir -p build/tikz-cache build/build/tikz-cache}
\fi
\makeatother
\makeatletter
\let\TPHT@origpgfplotslegendfromname\pgfplotslegendfromname
\renewcommand{\pgfplotslegendfromname}[1]{%
    \tikzexternaldisable
    \TPHT@origpgfplotslegendfromname{#1}%
    \tikzexternalenable
}
\makeatother
\usepackage{filecontents}
\usepackage{enumitem} % For customizing itemize and enumerate environments
\usepackage{subcaption} % For subfigures and subcaptions
\usepackage{amsmath} % For math environments
\usepackage{multirow} % For multi-row table cells
\usepackage{pdflscape} % For landscape pages
\usepackage{booktabs} % For better table formatting
\usepackage{xspace}
\usepackage{wrapfig} % For wrapfigure environment
\pgfplotsset{compat=1.18}
\pgfplotsset{
  every axis/.append style={
    xticklabel style={yshift=2pt},
    yticklabel style={xshift=2pt},
    xlabel style={yshift=5pt},
    ylabel style={yshift=-5pt}
  }
}
\usepgfplotslibrary{statistics} % For statistical plots
\usepgfplotslibrary{fillbetween}
\usepackage{algorithm}
\usepackage[noend]{algpseudocode}
\usepackage[T1]{fontenc}
\usepackage{xcolor} % For text colors
\usepackage[normalem]{ulem} % For \sout strikeout that handles math expressions
\usepackage{soul}
\usepackage{wrapfig}

\usepackage{booktabs}
\usepackage{threeparttable}
\usepackage{caption}

\usepackage{etoolbox}

\newcommand{\tightsize}{\setlength{\tabcolsep}{6pt}\renewcommand{\arraystretch}{1.15}}

\microtypesetup{expansion=true}

\definecolor{exp1}{RGB}{31, 119, 180}    % blue - Cuckoo 
\definecolor{exp2}{RGB}{214, 39, 40}     % red - Iceberg
\definecolor{exp3}{RGB}{127, 127, 127}   % grey - Junction
\definecolor{exp4}{RGB}{44, 160, 44}     % green - TPHT
\definecolor{exp5}{RGB}{148, 103, 189}   % purple - TBB
\definecolor{exp6}{RGB}{255, 127, 14}     % yellow - Blast
\definecolor{exp7}{RGB}{102, 194, 165}    % teal - backup1
\definecolor{exp8}{RGB}{231, 138, 195}    % pink - backup2
\definecolor{poisson}{RGB}{148, 103, 189} % purple - for theoretical plots

\pgfplotsset{
    CuckooBarStyle/.style={fill=exp1!35, draw=exp1!80},
    IcebergBarStyle/.style={fill=exp2!35, draw=exp2!80},
    JunctionBarStyle/.style={fill=exp3!35, draw=exp3!80},
    TPHTBarStyle/.style={fill=exp4!35, draw=exp4!80},
    TBBBarStyle/.style={fill=exp5!35, draw=exp5!80},
    BlastBarStyle/.style={fill=exp6!35, draw=exp6!80},
    htthreeBarStyle/.style={CuckooBarStyle},
    htfourBarStyle/.style={IcebergBarStyle},
    htfiveBarStyle/.style={JunctionBarStyle},
    htoneBarStyle/.style={TPHTBarStyle},
    htsixBarStyle/.style={TBBBarStyle},
    httwoBarStyle/.style={BlastBarStyle}
} 

\pgfplotsset{
    CuckooLineStyle/.style={color=exp1, mark=o, mark size=2.5pt, thick, smooth},
    IcebergLineStyle/.style={color=exp2, mark=square, mark size=2.5pt, thick, smooth},
    JunctionLineStyle/.style={color=exp3, mark=triangle, mark size=2.5pt, thick, smooth},
    TPHTLineStyle/.style={color=exp4, mark=diamond, mark size=2.5pt, thick, smooth},
    TBBLineStyle/.style={color=exp5, mark=star, mark size=2.5pt, thick, smooth},
    BlastLineStyle/.style={color=exp6, mark=pentagon, mark size=2.5pt, thick, smooth},
    BackupLineStyleOne/.style={color=exp7, mark=otimes, mark size=2.8pt, thick, smooth, mark options={solid}},
    BackupLineStyleTwo/.style={color=exp8, mark=x, mark size=2.8pt, thick, smooth, mark options={solid}},
    htthreeLineStyle/.style={CuckooLineStyle},
    htfourLineStyle/.style={IcebergLineStyle},
    htfiveLineStyle/.style={JunctionLineStyle},
    htoneLineStyle/.style={TPHTLineStyle},
    htsixLineStyle/.style={TBBLineStyle},
    httwoLineStyle/.style={BlastLineStyle}
}

\pgfplotsset{
    PoissonLineStyle/.style={color=poisson, thick, smooth},
    PoissonFillStyle/.style={color=poisson!20, fill=poisson!20},
    OccScatterStyle/.style={only marks, mark=o, mark size=2.2pt, color=exp1, opacity=0.85}
}

\pgfplotsset{
    InsertionOnlyStyle/.style={color=exp1, mark=o, mark size=2.5pt, thick, smooth},
    WithDeletionStyle/.style={color=poisson, mark=square, mark size=2.5pt, thick, smooth},
    PoissonTheoryStyle/.style={color=poisson, thick, smooth},
    ExperimentalBoxStyle/.style={fill=exp1!25, draw=exp1}
}

\pgfplotsset{
    CuckooStyle/.style={CuckooBarStyle},
    IcebergStyle/.style={IcebergBarStyle},
    JunctionStyle/.style={JunctionBarStyle},
    TPHTStyle/.style={TPHTBarStyle},
    TBBStyle/.style={TBBBarStyle},
    BlastStyle/.style={BlastBarStyle},
    htthreeStyle/.style={CuckooBarStyle},
    htfourStyle/.style={IcebergBarStyle},
    htfiveStyle/.style={JunctionBarStyle},
    htoneStyle/.style={TPHTBarStyle},
    htsixStyle/.style={TBBBarStyle},
    httwoStyle/.style={BlastBarStyle}
}
\newcommand{\addDataPlot}[5]{%
\addplot[#1] table[y expr=\thisrow{#3 (ops/s)}/1000000, restrict expr to domain={\thisrow{entry_id}}{#5:#5}, restrict expr to domain={\thisrow{case_id}}{#4:#4}, restrict expr to domain={\thisrow{object_id}}{#2:#2}]{\data};%
}

\newcommand{\addAllStandardPlots}[3]{%
\addDataPlot{CuckooStyle}{6}{#2}{#1}{#3}
\addDataPlot{IcebergStyle}{7}{#2}{#1}{#3}
\addDataPlot{JunctionStyle}{15}{#2}{#1}{#3}
\addDataPlot{TBBStyle}{24}{#2}{#1}{#3}
\addDataPlot{TPHTStyle}{17}{#2}{#1}{#3}
\addDataPlot{BlastStyle}{20}{#2}{#1}{#3}
}

\def\getlabelname#1#2{%
    \ifnum#2=0%
        Load%
    \else%
        \ifnum#1=17 Run A\fi%
        \ifnum#1=18 Run B\fi%
        \ifnum#1=19 Run C\fi%
        \ifnum#1=20 Run A$^-$\fi%
        \ifnum#1=21 Run B$^-$\fi%
        \ifnum#1=22 Run C$^-$\fi%
        \ifnum#1=28 Run X\fi%
    \fi%
}

\newcommand{\generateSubfigure}[4]{%
\begin{subfigure}[b]{0.125\textwidth}
\centering
\begin{tikzpicture}
\begin{axis}[
    width=2.8cm,
    height=3.5cm,
    ylabel={\small Throughput (M/s)},
    ylabel style={at={(ticklabel* cs:1.2)}, anchor=south},
    xlabel={#2},
    ybar,
    bar width=3pt,
    xticklabels={},
    xtick style={font=\small,draw=none},
    axis lines=box,
    tick align=inside,
    scaled ticks=true,
    tick label style={font=\small,/pgf/number format/fixed,/pgf/number format/precision=1},
    ymajorgrids=true,
    yminorgrids=true,
    minor tick num=1,
    max space between ticks=35pt,
    try min ticks=5,
    grid style={gray!30},
    ymin=0,
    legend entries = {\htthree, \htfour, \htfive, \htsix, \htone, \httwo},
    legend cell align = left,
    legend style={draw=none, font=\small, legend columns=6, /tikz/every even column/.append style={column sep=0.5cm}},
    legend to name={throughput-legend-horizontal}
]
\addAllStandardPlots{#1}{#3}{#4}
\end{axis}
\end{tikzpicture}
\end{subfigure}%
}

\newcommand{\generateSubfigureNoYLabel}[4]{%
\begin{subfigure}[b]{0.125\textwidth}
\centering
\begin{tikzpicture}
\begin{axis}[
    width=2.8cm,
    height=3.5cm,
    xlabel={#2},
    ybar,
    bar width=3pt,
    xticklabels={},
    xtick style={font=\small,draw=none},
    axis lines=box,
    tick align=inside,
    scaled ticks=true,
    tick label style={font=\small,/pgf/number format/fixed,/pgf/number format/precision=1},
    ymajorgrids=true,
    yminorgrids=true,
    minor tick num=1,
    max space between ticks=35pt,
    try min ticks=5,
    grid style={gray!30},
    ymin=0
]
\addAllStandardPlots{#1}{#3}{#4}
\end{axis}
\end{tikzpicture}
\end{subfigure}%
}

\def\ShiftA{-4.0*\pgfplotbarwidth}
\def\ShiftB{-2.4*\pgfplotbarwidth}
\def\ShiftC{-0.8*\pgfplotbarwidth}
\def\ShiftD{ 0.8*\pgfplotbarwidth}
\def\ShiftE{ 2.4*\pgfplotbarwidth}
\def\ShiftF{ 4.0*\pgfplotbarwidth}

\newcommand{\addResizingPointPlot}[6]{%
  \begingroup
    \edef\YCOL{#4 (ops/s)}%
    \addplot[#1, bar shift=#6] table[
      x expr=#5,
      y expr=\thisrow{\YCOL}/1e6,
      restrict expr to domain={\thisrow{entry_id}}{1:1},
      restrict expr to domain={\thisrow{object_id}}{#2:#2},
      restrict expr to domain={\thisrow{case_id}}{#3:#3}
    ]{\data};
  \endgroup
}

\newcommand{\addOneXGroup}[3]{% #1=metric, #2=case_id, #3=xindex
  \addResizingPointPlot{CuckooBarStyle}{6}{#2}{#1}{#3}{\ShiftA}%
  \addResizingPointPlot{IcebergBarStyle}{7}{#2}{#1}{#3}{\ShiftB}%
  \addResizingPointPlot{JunctionBarStyle}{15}{#2}{#1}{#3}{\ShiftC}%
  \addResizingPointPlot{TBBBarStyle}{24}{#2}{#1}{#3}{\ShiftD}%
  \addResizingPointPlot{TPHTBarStyle}{18}{#2}{#1}{#3}{\ShiftE}%
  \addResizingPointPlot{BlastBarStyle}{21}{#2}{#1}{#3}{\ShiftF}%
}

\newcommand{\addResizingPlots}{%
  \addOneXGroup{fill_throughput}{17}{0}% x=0
  \addOneXGroup{run_throughput}{17}{1}% x=1
  \addOneXGroup{run_throughput}{18}{2}% x=2
  \addOneXGroup{run_throughput}{19}{3}% x=3
  \addOneXGroup{run_throughput}{22}{4}% x=4
}

\newcommand{\readdatafiledir}{./plots/}
\pgfplotstableread[col sep=comma, sci]{\readdatafiledir csv/ycsb_results.csv}{\data}
\pgfplotstableread[col sep=comma, sci]{\readdatafiledir csv/occupancy_results.csv}{\occupancydata}
\pgfplotstableread[col sep=comma, sci]{\readdatafiledir csv/throughput_space_eff_results.csv}{\throughputdata}
\pgfplotstableread[col sep=comma, sci]{\readdatafiledir csv/throughput_space_eff_compact_results.csv}{\throughputcompactdata}
\pgfplotstableread[col sep=comma, sci]{\readdatafiledir csv/progressive_resizing_results.csv}{\progressiveresdata}
\pgfplotstableread[col sep=comma, sci]{\readdatafiledir csv/progressive_resizing_sliding_window_results.csv}{\progressiveresslidingwindowdata}
\pgfplotstableread[col sep=comma, sci]{\readdatafiledir csv/load_factor_support_results.csv}{\loadfactordata}
\pgfplotstableread[col sep=comma, sci]{\readdatafiledir csv/scaling_results.csv}{\scalingdata}
\pgfplotstableread[col sep=comma, sci]{\readdatafiledir csv/small_table_results.csv}{\smalltabledata}
\pgfplotstableread[col sep=comma, sci]{\readdatafiledir csv/tinypointers_comparison_results.csv}{\tpcompdata}
\pgfplotstableread[col sep=comma, sci]{\readdatafiledir csv/occupancy_experimental_box_random.csv}{\occboxrandom}
\pgfplotstableread[col sep=comma, sci]{\readdatafiledir csv/occupancy_experimental_box_sequential.csv}{\occboxsequential}
\pgfplotstableread[col sep=comma, sci]{\readdatafiledir csv/occupancy_experimental_box_low_hamming.csv}{\occboxlowhamming}
\pgfplotstableread[col sep=comma, sci]{\readdatafiledir csv/occupancy_experimental_box_high_hamming.csv}{\occboxhighhamming}
\pgfplotstableread[col sep=comma, sci]{\readdatafiledir csv/occupancy_poisson.csv}{\occpois}
\pgfplotstableread[col sep=comma, sci]{\readdatafiledir csv/data_size_scaling_results.csv}{\datasizedata}

\pgfplotstableread[col sep=comma, sci]{\readdatafiledir csv/ycsb_intro_summary.csv}{\ycsbintrodata}
\pgfplotstableread[col sep=comma, sci]{\readdatafiledir csv/max_space_efficiency.csv}{\maxspaceeffdata}

\pgfplotstableread[col sep=comma, sci]{\readdatafiledir csv/percentile_results.csv}{\percentiledata}

\pgfplotstableread[col sep=comma, sci]{\readdatafiledir csv/query_percentile_results.csv}{\querypercentiledata}

\pgfplotstableread[col sep=comma, sci]{\readdatafiledir csv/mem_growth.csv}{\memgrowthdata}

\pgfplotstableread[col sep=comma, sci]{\readdatafiledir csv/resizing_rss.csv}{\resizingrssdata}

\usepackage{amsthm}            % provides \newtheorem

\theoremstyle{plain}
\newtheorem{theorem}{Theorem}          % global numbering
\newtheorem{lemma}[theorem]{Lemma}     % share the same counter
\newtheorem{claim}[theorem]{Claim}     % share the same counter
\newtheorem{takeaway}{Takeaway}
\theoremstyle{definition}

\newcommand{\qpre}{q^{\text{pre}}}

\definecolor{alexcolor}{HTML}{800080}
\definecolor{xilincolor}{HTML}{BA0202}
\definecolor{billcolor}{HTML}{1E9626}
\definecolor{yuqicolor}{HTML}{8F8C59}
\ifrevisionmode
  
  \DeclareRobustCommand{\alex}[1]{\textcolor{alexcolor}{Alex: #1}}
  \DeclareRobustCommand{\bill}[1]{\textcolor{billcolor}{Bill: #1}}
  \DeclareRobustCommand{\xilin}[1]{\textcolor{xilincolor}{Xilin: #1}}
  \DeclareRobustCommand{\yuqi}[1]{\textcolor{yuqicolor}{Yuqi: #1}}
  
\else
  
  \DeclareRobustCommand{\alex}[1]{}
  \DeclareRobustCommand{\bill}[1]{}
  \DeclareRobustCommand{\xilin}[1]{}
  \DeclareRobustCommand{\yuqi}[1]{}
  
\fi
\soulregister\alex7
\soulregister\xilin7

\newcommand{\htabbr}{TPHT\xspace}
\newcommand{\htone}{Chained-TPHT\xspace}
\newcommand{\httwo}{Flattened-TPHT\xspace}
\newcommand{\htthree}{Cuckoo\xspace}
\newcommand{\htfour}{IcebergHT\xspace}
\newcommand{\htfive}{Junction\xspace}
\newcommand{\htsix}{TBB\xspace}
\newcommand{\htb}{Bucket\xspace}
\newcommand{\htg}{Group\xspace}
\newcommand{\htcp}{Cleary$^{plain}$\xspace}
\newcommand{\htcs}{Cleary$^{sparse}$\xspace}
\newcommand{\htlp}{Layered$^{plain}$\xspace}
\newcommand{\htls}{Layered$^{sparse}$\xspace}

\newcommand{\projurl}{\url{https://github.com/Xilinion/TinyPtr}}
\newcommand{\tphturl}{\url{https://github.com/Xilinion/TPHT}}

\newcommand{\defn}[1]{\textit{\textbf{#1}}}

\ifrevisionmode
  \newcommand{\TODO}[1]{\textcolor{red}{\texttt{TODO:\ #1}}}
\else
  \newcommand{\TODO}[1]{}
\fi

\newcommand{\httwoavgfillthroughput}{226.9}

\newcommand{\httwoavgrunthroughput}{310.4}

\newcommand{\htoneavgfillthroughput}{99.5}

\newcommand{\htoneavgrunthroughput}{153}

\newcommand{\htthreeavgfillthroughput}{116.7}

\newcommand{\htthreeavgrunthroughput}{126.3}

\newcommand{\httwofillspeedup}{1.84}

\newcommand{\htonefillspeeduppercent}{80.9}

\newcommand{\httworunspeedup}{1.94}

\newcommand{\htonerunspeeduppercent}{95.7}

\newcommand{\httwospeeduppercent}{89.3}

\newcommand{\htfouroverhtthreerunthroughputpercent}{26.6}

\newcommand{\htfiveruncthroughput}{293.9}

\newcommand{\htonemaxspaceefficiencypercent}{105.4}

\newcommand{\httwomaxspaceefficiencypercent}{83.4}

\newcommand{\htonememshaveupperpercent}{69.3}

\newcommand{\htonememshavelowerpercent}{22.5}

\newcommand{\htonememshavemeanpercent}{38.6}

\newcommand{\httwomemshavemeanpercent}{22.3}

\newcommand{\htthreethroughputdroppercent}{24.8}

\newcommand{\htfourthroughputdroppercent}{3.9}

\newcommand{\htfivethroughputdroppercent}{48.3}

\newcommand{\htonethroughputdroppercent}{35.3}

\newcommand{\httwothroughputdroppercent}{19.8}

\newcommand{\htfiveoverhttwolowloadratio}{1.4}

\newcommand{\httwoinsertionspeeddifffivezeropct}{30.4}

\newcommand{\httwopositivequerypercentoffastestfivezeropct}{88.5}

\newcommand{\httwonegativequeryspeeddifffivezeropct}{24.1}

\newcommand{\compactbucketingmaxspaceefficiencypercent}{110.3}

\newcommand{\htcpmaxspaceefficiencypercent}{114.7}

\newcommand{\htlpmaxspaceefficiencypercent}{90.1}

\newcommand{\compactbucketinghtonespeedup}{9.8}

\newcommand{\htonebyhtcppercent}{62.5}

\newcommand{\httwooverhtlpspeedup}{2.44}

\newcommand{\htlsperfdropvshtlppercent}{64.1}

\newcommand{\htoneovercompactminspeedup}{1.37}

\newcommand{\htoneovercompactmaxspeedup}{20.13}

\newcommand{\htlsspacegainvshtlppercent}{8.5}

\newcommand{\insertiononlyloadfactorpercent}{98.2}

\newcommand{\deletionincludedloadfactorpercent}{95}

\newcommand{\htonethreadscalingfactor}{0.908}

\newcommand{\httwothreadscalingfactor}{0.925}

\newcommand{\httworesizingavgrunthroughput}{245}

\newcommand{\htfourresizingthroughputdecreasepercent}{35.9}

\newcommand{\httworesizingfillthroughputdecreasepercent}{76.2}

\newcommand{\httworesizingthroughputdroppercent}{27.4}

\newcommand{\htfourworstresizingspaceefficiencypercent}{32.6}

\newcommand{\htsevenworstresizingspaceefficiencypercent}{46.7}

\newcommand{\tpoursspeedupmin}{2.17}

\newcommand{\tpoursspeedupmax}{3.69}

\newcommand{\htonesmallnegquerythroughput}{75.3}

\newcommand{\htonescalingsmallavgthroughput}{36.9}

\newcommand{\httwoscalingsmallavgthroughput}{43}

\newcommand{\htonescalingsmallavgspeedupoverbest}{1.02}

\newcommand{\httwoscalingsmallavgspeedupoverbest}{1.19}

\newcommand{\httwoscalinglargeavgthroughput}{19.1}

\newcommand{\httwoscalinglargeavgspeedupoverbest}{1.11}

\ifrevisionmode
  \DeclareRobustCommand{\rev}[1]{\textcolor{blue}{#1}}
  \newcommand{\mymarginpar}[1]{\marginpar{\color{red}\scriptsize #1}}
\else
  \DeclareRobustCommand{\rev}[1]{#1}
  \newcommand{\mymarginpar}[1]{}
\fi

\begin{document}
\title{Succinct and Fast Tiny Pointer Hash Tables}

\author{Xilin Tang}
\affiliation{%
  \institution{Cornell University}
  \city{New York}
  \state{NY}
  \country{USA}
}
\email{xilin@cs.cornell.edu}

\author{Yuqi Mai}
\affiliation{%
  \institution{Cornell University}
  \city{New York}
  \state{NY}
  \country{USA}
}
\email{ym562@cornell.edu}

\author{William Kuszmaul}
\affiliation{%
  \institution{Carnegie Mellon University}
  \city{Pittsburgh}
  \state{PA}
  \country{USA}
}
\email{kuszmaul@cmu.edu}

\author{Alex Conway}
\affiliation{%
  \institution{Cornell Tech}
  \city{New York}
  \state{NY}
  \country{USA}
}
\email{me@ajhconway.com}

\begin{abstract}

Hash tables sit on the critical path of many systems, yet modern designs still force a trade-off between fast operations and high memory overhead.
We revisit this trade-off and present Tiny Pointer Hash Tables (TPHT), a family of practical hash tables that make two ideas from theory work at system scale: compressing pointers down to a byte, and encoding keys compactly so less metadata is needed.
We engineer these ideas into two complementary designs.
Chained-TPHT targets maximal space savings, \rev{ and is to the best of our knowledge the first simple and practical succinct hash table design, achieving a footprint less than the total data size with constant-time operations.}
Flattened-TPHT targets latency, organizing data to keep the common case within a single cache miss while retaining strong space efficiency.
Both variants support dynamic resizing without global pauses and integrate cleanly with 64-bit keys and values.

Across YCSB and microbenchmarks, TPHT advances the latency-space Pareto frontier: Chained-TPHT reaches \htonemaxspaceefficiencypercent\% space efficiency, and Flattened-TPHT achieves \httwomaxspaceefficiencypercent\% space efficiency with up to \httwospeeduppercent\% higher throughput than strong baselines.
Together, these results show that techniques primarily known in theory can be turned into production-ready hash tables that meaningfully reduce memory use while delivering state-of-the-art performance.

\end{abstract}

\maketitle
\begin{center}
\small
This is an author-prepared arXiv version of a paper published in
\publishedvenue\ \publishedvolume(\publishedissue):\publishedpages,
\publishedyear.
The version of record is available at
\href{https://doi.org/\publisheddoi}{\publisheddoi}.
Source code and experimental artifacts are available at \projurl;
a standalone implementation for direct use is available at \tphturl.
\end{center}

\section{Introduction}\label{sec:introduction}

Hash tables underpin a vast range of data-intensive systems, from distributed databases~\cite{10.14778/3025111.3025113, abebe2020morphosys, alquraan2020scalable} and in-memory key-value stores~\cite{dewitt1984implementation, garcia2002main, wang2022case, 196284} to network monitoring~\cite{snoeren2001hash, 10.1145/3716819}, genomics analysis~\cite{todd2016parallel, muller2017metacache, wood2019improved}, and real-time analytics engines~\cite{ashraf2015understanding, hua2014fast}.
Because hash tables promise expected \(O(1)\) queries, they are often the default choice for high-throughput applications.
However, the size of datasets has exploded and continues to grow, while the cost of DRAM limits the practical memory capacity of servers.
Nonetheless, state-of-the-art low-latency hash tables use only a fraction of their memory for data, with substantial portions used by metadata and collision resolution~\cite{swisstables,junction,10.1145/3588727}.

%The key issue is that collision resolution typically requires substantial memory overhead.
%For example, with separate chaining every pointer consumes a substantial amount of memory, and the resulting indirection causes additional caches misses leading to higher expected and tail latencies.
%On the other hand, open-addressing schemes---linear probing, cuckoo hashing, Robin Hood, and their variants---do not use pointers and so do not incur these memory overheads.
%However, the performance of these systems degrades sharply as occupancy rises, and so they require low load factors to stay fast in practice~\cite{baeldung2024hashmap}.
%Recent designs such as Google's Swiss Tables~\cite{swisstables} and Meta's F14 Hash Table~\cite{folly} mask this slowdown with bucketization and per-entry metadata to enable SIMD intrinsics.
%However, this added metadata increases memory overhead, and as open-addressing schemes they still face load-factor tradeoffs. 
%Across these approaches, pointer overhead, sparse layouts, and auxiliary bits routinely waste tens of bytes per key.

This paper revisits the fundamentals of hash tables and poses two questions:
\begin{itemize}
    \item \textbf{Space.} How close can a practical hash table drive memory usage to the information-theoretic lower bound?
    \item \textbf{Speed.} Can it still rival the throughput of today's state-of-the-art designs while consuming markedly less memory?
\end{itemize}

We show how to build fast space-efficient hash tables using \defn{tiny pointers}~\cite{tinypointersSODA} and \defn{quotienting}~\cite{10.5555/280635,10.5555/548089,Arbitman2009BackyardCH}, theoretical techniques to compress pointers and keys respectively.
We first demonstrate how these ideas from theory can be practically implemented, and then use them in two different hash-table designs.
The first replaces the pointers in a chaining hash table with tiny pointers, which together with quotienting, results in an ultra space-efficient hash table; the table can even be smaller than the underlying data!\footnote{Minimal and dynamic perfect hashing~\cite{Belazzougui2009HashDA,doi:10.1137/S0097539791194094} can approach even smaller footprints by compressing hash functions for fixed or slowly changing key sets, but they typically target membership or static dictionaries rather than storing full key-value pairs with general updates.}
The second uses tiny pointers and quotienting in an open-addressing hash table to pack keys and collision-resolution metadata efficiently into each cache line, which yields a space-efficient hash table with state-of-the-art performance.

\paragraph{Tiny Pointers}
Tiny pointers are implemented through a data structure called a \defn{dereference table}.
A tiny pointer compresses a full-word address to just \(\min(\log \delta^{-1}, \log \log \log n)\)~bits, where \(1-\delta\) is the load factor of the dereference table.
Tiny-pointer theory uses recursive multi-level dereference tables to achieve theoretically optimal space usage, but this is difficult to implement in practice.
\mymarginpar{R2D6}\mymarginpar{R4W1}
One contribution of this paper is to offer a simpler implementation based on power-of-two-choice scheme, which empirically \rev{works without failure at a load factor of} \(\insertiononlyloadfactorpercent\%\).

In our hash tables tiny pointers use 8~bits instead of the 64~bits of a standard machine word or the \(O(\log n)\)~bits that a straightforward index would require.

\paragraph{Quotienting}
The core idea behind quotienting is easy to understand if we consider a chained hash table that stores fingerprints (i.e.\ hashed keys) instead of the keys themselves.
The chain that a key belongs to is determined by the high-order bits of its fingerprint, referred to as a \defn{quotient}.
Therefore, these bits are already implicitly determined by each item's location, and so the table need only store the low-order bits of the fingerprint, the \defn{remainder}.
This technique is used in quotient filters~\cite{quotientfiltersVLDB} to build a near optimal filter data structure.

However, hash tables need to store the key itself, not just the fingerprint, because fingerprints can collide.
\rev{To solve this problem, we use a technique, one-round Feistel permutations, from the theory literature to encode the}\mymarginpar{R2D1} keys themselves so that they can be quotiented without losing the original key.
\rev{This allows} us to build real-world hash tables that use quotienting to save roughly \(\log n\)~bits of space per key.

\paragraph{Tiny-Pointer Hash Tables}
We use tiny pointers and quotienting to design two novel high-performance space-efficient hash tables.

\paragraph{\htone}
Our first design emphasizes space efficiency by using a chained hash table design, with a head array storing (tiny) pointers to the first link in each chain.
The links in the chains use tiny pointers and the keys are stored using quotienting.
Perhaps surprisingly, under most configurations, \htone simultaneously achieves lower memory usage than \emph{even storing the underlying data in plain-text form} and strong overall performance.

\htone uses tiny pointers to reduce the space consumed by pointers in the head array and the chain links.
Quotienting then recovers enough memory from the keys to fully compensate for this (much reduced) overhead.

Because the head array uses tiny pointers, its size is a small fraction of the data size.
This means that it can have sufficiently many entries that each chain is short, so that operations incur few cache misses.
Ideally, in a system with a 64\,MB~L3 cache, the head array for a 1\,GB~table can fit entirely in cache.

Despite being one of the oldest and simplest data structures in computer science\footnote{First introduced in 1953 by Hans Peter-Luhn, chained hash tables predate both unbalanced and balanced binary trees. In fact, according to Knuth, the chained hash table may be the first ever use of a \emph{linked list} in a computer program~\cite{10.5555/280635}}, chained hash tables have since been considered to be impractical in space-sensitive settings because of the many pointers that the data structure requires.
\htone turns this conventional wisdom on its head, demonstrating that, with the right algorithmic techniques, chained hashing can actually be implemented as an \emph{extremely} space-efficient data structure.

\paragraph{\httwo}
While \htone achieves excellent memory-efficiency, there is still a latency overhead due to the indirection of having to first access the head array and then access the data itself.
Our second design, \httwo, is less memory-efficient, but allows most operations to be completed using a single cache miss, leading to better average latency.
Furthermore, operations average 1.25~cache misses and never exceed three (outside resizing), which yield a strong tail latency profile.

\httwo accomplishes this by replacing the head array with an array of cache-line-sized home groups.
Each key hashes to a home group, which itself stores 0--4 key-value pairs.
When more than 4 key-value pairs hash to the same home group, some ``overflow'' key-value pairs are stored in a dereference table, and are referenced from the head group using tiny pointers.
The home entries are accessed in a single cache miss, whereas the overflow entries require two cache misses.
The memory-efficiency tradeoff occurs because the head groups will not have uniform occupancy, so some memory in head groups will go unused.

Consistent with prior work~\cite{10.1145/3588727,10.14778/3389133.3389134,10.1145/3309206,dramhit,katsarakis2024dlht}, both variants support 64-bit keys and values, deletions, and online resizing.
As in previous hash-table designs, they can be straightforwardly extended to variable-length keys and values by storing pointers to the actual keys and values in the hash table.

\textbf{Results.}
\htone attains \htonemaxspaceefficiencypercent\% space efficiency, reducing memory by \htonememshavemeanpercent\% relative to state-of-the-art baselines.
\httwo delivers \httwomaxspaceefficiencypercent\% space efficiency and cuts memory by \httwomemshavemeanpercent\% while running \httwospeeduppercent\% faster than the same baselines.

\textbf{Contributions.}
This paper contributes the following:
\begin{itemize}
    \item \textbf{Practical tiny pointers.}
        We present a simple and practical implemention of  tiny pointers, which empirically results in excellent compression with efficient dereferencing.
    \item \textbf{Practical key quotienting.}
        We incorporate quotienting into real-world hash tables, saving \(\log n\)~bits per key.
    \item \textbf{Beyond 100\% space efficiency.}
        Using these two techniques, \htone \rev{is \htoneovercompactminspeedup-\htoneovercompactmaxspeedup$\times$ the speed of prior hash tables whose footprint is} below the raw data size.\mymarginpar{R2D1}
    \item \textbf{State-of-the-art performance with less space.}
        \httwo pairs the same techniques with a cache- and SIMD-aware layout to surpass state-of-the-art tables in performance while using less memory.
%    \item \textbf{In-flight cooperative resizing.}
%        A lightweight framework lets worker threads assist an ongoing resize without global pauses or transient memory spikes, providing smooth concurrency as the table grows.
\end{itemize}

\rev{The paper also comes with a contribution of theoretical interest. From a theory perspective, \htone is a \emph{succinct hash table}, a hash table that uses space within a factor of $(1 + o(1))$ of the information-theoretic optimum, while offering $O(1)$ expected-time operations.\footnote{In the standard parameter regime where keys are of size $w = (1 + \Theta(1))\log n$ bits, \htone uses space $B + O(n \log \log n) = (1 + O(\log \log n / \log n))B$ bits, where $B$ is the information-theoretic optimum.} Such hash tables have existed in the theory literature (e.g., \cite{raman2003succinct,Arbitman2009BackyardCH,10.1145/3519935.3519969,bender2024modern}), but are considered too complicated (and with too poor of constants) to be practical. (As we will discuss in Section~\ref{sec:background}, there are practical \emph{compact hash tables} \cite{10.1007/s00453-022-00996-y,2019kopple,10.1109/TC.1984.1676499,poyias2017mbonsaipracticalcompactdynamic}, which can achieve $(1 + o(1))$ of the space optimum but not with $O(1)$ operations.) \htone, in addition to being practical, is arguably also the simplest succinct hash-table design to date.}

\paragraph{Artifacts and standalone implementation}
The research prototype and experimental artifacts are available at \projurl.
We also provide a standalone implementation of \htabbr intended for direct use by practitioners at \tphturl.

%\xilin{Need rewriting after other parts are finished.}
%\xilin{Do we need this overview at all?}
%We begin by revisiting the theoretical lower bounds on space usage for hash tables within the word RAM model, which establishes the context and motivation for our work in Section~\ref{sec:space_bound}.
%The core of our approach is the concept of "tiny pointers," whose functionality and potential for space saving we explore in detail in Section~\ref{sec:tinyptrlib}.
%Building on this, we show how to combine tiny pointers with quotienting to construct a hash table that achieves a sub-data-size footprint in Section~\ref{sec:htone}.
%We then introduce a flattened data layout that optimizes memory access patterns, enabling performance competitive with state-of-the-art designs in Section~\ref{sec:httwo}.
%Finally, we complete our design by presenting robust mechanisms for concurrency control and dynamic resizing, ensuring its practicality in real-world systems in Section~\ref{sec:conc_resize}.

\section{Background}\label{sec:background}

% In this section we discuss several common hash table designs, emphasizing how these designs lead to memory-performance tradeoffs and other memory overheads.

Most hash table designs can be broadly be divided into two categories, \defn{separate chaining} and \defn{open addressing}.

\paragraph{Separate chaining}
A separate chaining hash table hashes each key to a bucket, which is then stored as a linked list.
Each list is referred to as a \defn{chain}.
The canonical design uses a \defn{head array}, each entry of which stores a pointer to the first link in its chain.
Variations on this design optimize for resizing and performance bottlenecks. Split-Ordered Lists~\cite{10.1145/1147954.1147958} enable lock-free incremental resizing while preserving bucket semantics, and CLHT~\cite{10.1145/2786763.2694359} improves cache locality by packing multiple entries into cache-line-sized mini-buckets.

A major source of memory overhead in separate chaining are the pointers in the chains.
Each item must store a pointer, so for example if the hash table stores 64-bit keys and values and uses 64-bit pointers, this immediately leads to a 50\% memory overhead.
However, additionally, assuring performance in the table often requires even more memory overheads.

Separate chaining performance is dominated by cache misses incurred during operations: queries must access the head pointer and traverse the chain until we find the key.
Therefore, for performance reasons, it's important to have enough buckets that the chains are short.
However, doing so introduces more null pointers in the head array, causing additional memory overhead.
Some designs store key-value pairs in the entries in the head array.
This removes the indirection but further inflates entry size and leaves many entries empty.\footnote{With \(n\) elements and \(n\) buckets, about \(1/e \approx 0.37\) of buckets are empty.}
In CLHT, it actually packs each link as a cache-line mini-bucket (up to three 64-bit tuples inside).
This improves the locality of chain traversals, since multiple items can be retrieved in a single cache line.
However, the imperfect load balancing typically leaves over 62\%\cite{10.1145/3588727} of memory unused.

So in terms of memory efficiency, separate chaining starts with a large overhead from pointers and then must decide how much additional memory overhead to trade for performance reasons.

\paragraph{Open addressing}
An open addressing hash tables store all entries directly in a single array, using a probe sequence to resolve collisions.
The most common variant, linear probing, searches sequentially from the hash location until finding an empty slot or the target key~\cite{flajolet1998analysis,braverman2024tight}.
Other schemes like quadratic probing~\cite{kuszmaul2024towards}, double hashing~\cite{guibas1976analysis}, and cuckoo hashing~\cite{pagh2004cuckoo} use different probe sequences to distribute keys and reduce clustering effects.

The primary advantages of open addressing are data locality and implementation simplicity.
However, performance degrades rapidly as the load factor increases due to longer probe sequences and increased cache misses.
To maintain good performance, most open addressing schemes operate at load factors, typically below 75\%~\cite{baeldung2024hashmap}, leaving substantial memory unused.
%\mymarginpar{R2D1}
%\xilin{Even for compact schemes~\cite{10.1109/TC.1984.1676499,poyias2017mbonsaipracticalcompactdynamic}, the load factor is still a problem as discussed in Section~\ref{subsec:space_bound}.}

Recent designs attempt to mitigate these issues through metadata optimizations.
Google's Swiss Tables~\cite{swisstables} and Meta's F14 Hash Table~\cite{folly} use metadata bytes to quickly skip over occupied slots during probing.
%IcebergHT~\cite{10.1145/3588727} employs similar per-entry tags to reduce probe distances, but these metadata bits add memory overhead.
%Ordered linear probing (also known as Robin Hood hashing) minimizes variance in probe distances by displacing existing entries, improving worst-case performance but requiring additional bookkeeping.

%Taken together, there appear to be fundamental tradeoffs between time and space, preventing the design of hash tables that are both space and time efficient simultaneously.
%Separate chaining pays pointer and head-array overheads to bound indirection, whereas open addressing eliminates pointers but requires relatively low load factors and per-entry metadata to keep probes short.
%%Thus, practical hash table designs spend nontrivial memory to sustain performance.
%The main contribution of this paper is to show how these tradeoffs can be substantially improved by using tiny pointers and quotienting.

\mymarginpar{R2D1}
\paragraph{Compact Hash Tables}
\rev{Recall that a hash table is \emph{compact} if it uses space within a factor of $1 + \delta$ of optimal, for some $\delta \in (0, 1)$, while supporting operation times that are a function of $\delta^{-1}$. This is a weaker guarantee than being \emph{succinct}, which means that the hash table can support $\delta = o(1)$, with $O(1)$-time operations. Whereas (until this work) succinct hash tables have been theoretical only, several designs of compact hash tables have been shown to be practically feasible. (See, also, Section~\ref{subsec:space_bound} and Section~\ref{subsec:compact_ht_comparison} for comparisons with \htone.)}

\rev{Like separate chaining, compact bucketing~\cite{10.1007/s00453-022-00996-y,2019kopple} uses variably sized buckets to store entries, but instead of storing them as linked lists, they store them on the heap and reallocate them as necessary.
This allows the hash table to be space efficient, but the cost of reallocation significantly hurts update performance and causes memory fragmentation.}

\rev{Compact linear-probing schemes~\cite{10.1109/TC.1984.1676499} use quotienting together with metadata bits to store partial key information, but they suffer a performance penalty at high load factors due to long probe sequences.
Layered versions~\cite{poyias2017mbonsaipracticalcompactdynamic} use multiple layers to reduce probe distances and allow for better performance at high load factors.}

\section{Practical Tiny Pointers}\label{sec:tinyptrlib}

Tiny pointers offer a path to compressing pointers in data structures.
In this section we describe the user interface to tiny pointers and how to integrate them into \rev{a chained hash table example}.

\subsection{Tiny Pointers and Dereference Tables}\label{subsec:tinyptrderef}

\paragraph{The tiny-pointer user interface}
\rev{Tiny pointers are made possible through a data structure called a \defn{dereference table}, which functions as a specialized memory allocator that will contain the objects (in our case, nodes in the hash table) that the tiny pointers will point at.
Given a \defn{capacity} $n$ and a \defn{object size} $s$, we create a dereference table of size roughly $n\cdot s$ bytes, capable of storing up to $n$ objects of $s$ bytes each.
    The dereference table provides the following user interface for tiny pointers:
    \mymarginpar{R1D2}
    \begin{enumerate}
        \item \texttt{Allocate}$(k)$ allocates $s$ bytes in the dereference table, and returns a tiny pointer $p$.
              The parameter $k$ is the ID for the object, and must be unique among all other IDs that are currently active.
              The same ID $k$ must be provided in order to dereference or free $p$ in the future.
        \item \texttt{Dereference}$(k, p)$ returns a (non-tiny) pointer to the object with ID $k$.
              The prerequisite is that the ID $k$ is active with tiny pointer $p$.
        \item \texttt{Free}$(k, p)$ deallocates the object with ID $k$.
              The prerequisite is that $k$ is active with tiny pointer $p$.
    \end{enumerate}
}

\paragraph{Differences between traditional and tiny pointers. }
Tiny pointers go hand-in-hand with the \defn{dereference table}, which functions as a specialized memory allocator.
%The dereference table partitions its roughly $n\cdot s$ bytes amongst the up-to-$n$ allocations that exist at any time.
Whereas standard memory allocators allocate objects of different sizes and are primarily concerned with fragmentation, dereference tables allocate objects of fixed size $s$ and don't face this issue.
\rev{However, this means that they cannot be used to allocate variable-sized objects without using an additional level of indirection.}
\mymarginpar{R1D4}
On the other hand, objects in a dereference table must be able to be referenced via a 1-byte tiny pointer, whereas standard allocators can use 8-byte pointers.
That this is possible is non-obvious and requires significant theoretical techniques~\cite{tinypointersSODA}.

Notably, even though each object in the dereference table is given a user-assigned ID $k$, these IDs are not actually \emph{stored} in the dereference table.
(Indeed, the entire dereference table consists of roughly $n \cdot s$ bytes, which is only enough space to store the allocated $n$ objects.)
Rather, it is the \emph{user}'s job to provide the user-assigned ID $k$ whenever a tiny pointer is dereferenced or freed.

Finally, tiny pointers \emph{preserve} a key property of traditional pointers: following a pointer requires a single memory access.
As we will see later on in our proposed implementation, the translation of tiny pointer to pointer (via a call to \texttt{Dereference}$(k, p)$) incurs no additional memory accesses.

% \paragraph{The bounty: compressed pointers for a smaller and faster data structure}
% For a chained hash table, we can replace each 64-bit pointer with an 8-bit tiny pointer.
% This can also help improve performance as the head array is 8 times smaller and is likely to fit in a higher level cache.

\subsection{Practical Implementation}\label{subsec:tpimpl}

\paragraph{Allocating tiny pointers with two-choice bucketing}
Recall that a dereference table is initialized a maximum capacity $n$ and a object size $s$.
The dereference table will consist of $N\cdot s$-byte slots, where $N$ is selected to be slightly larger than $n$ (say, $n \approx 0.95 N$).
As we shall see, the choice to have $N$ slightly larger than $n$ (even just, say, 5\% larger) will be critical for good pointer compression, as it allows flexibility in which slot to use for each allocation.

The dereference table consists of two arrays:
\begin{itemize}
    \item \textbf{Meta table} ($2n$ bytes).
          Each of the $n$ \defn{bin}s stores a two-byte header: the current free-slot count and the offset of the first free slot\rev{, i.e., the list head of the free list}.\mymarginpar{R2D5}
    \item \textbf{Data table} ($\approx n$ entries).
          Each consecutive block of $(2^{7}-1)$ entries form a bin; free slots inside a bin are threaded into a free list via 7-bit offsets.
\end{itemize}

\begin{figure}[h]
    \centering
    \includegraphics[width=0.9\linewidth]{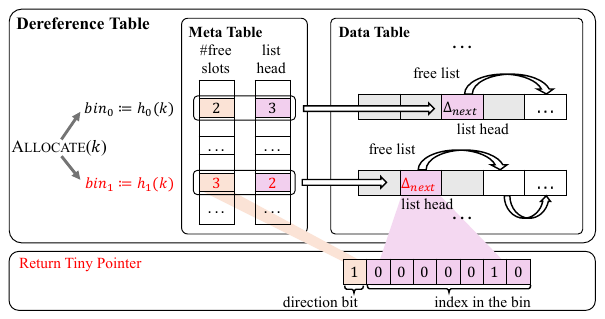}
    \caption{Meta layout, data bins, and the steps of \texttt{Allocate}$(k)$: two hashes, choose emptier bin, pop head, return 8-bit $p$.}\label{fig:tpimpl}
\end{figure}

\noindent To perform \texttt{Allocate}$(k)$, we:
\begin{enumerate}
    \item Look at the free-slot counters for bins $h_0(k)$ and $h_1(k)$, and let $h_i(k)$, $i \in \{0, 1\}$, be the emptier one.
    \item Pop the head of the free-slot list in bin $h_i(k)$ to obtain a free slot $j \in \{1, \ldots, 127\}$, and decrement the bin's free-slot counter.
          This step fails if the bin is fully occupied---we will return to this shortly.
    \item Return the 8-bit tiny pointer consisting of the \defn{direction bit} $i$ concatenated with the slot index $j$.
\end{enumerate}

This procedure is illustrated in figure~\ref{fig:tpimpl}.
Note that we reserve the tiny pointer $0$ to use as a null pointer---consequently we use bins of size $2^7-1$, rather than $2^7$.

\noindent Similarly, to perform \texttt{Dereference}$(k, p)$, we:
\begin{enumerate}
    \item Interpret $p$ as a direction bit $i$ and a slot index $j$.
    \item Return a pointer to the $j$-th slot of bin $h_i(k)$.
\end{enumerate}

Note that \texttt{Dereference}$(k, p)$ is just a computation and does not require a memory access.

\noindent Finally, to perform \texttt{Free}$(k, p)$, we:
\begin{enumerate}
    \item Interpret $p$ as a direction bit $i$ and a slot index $j$.
    \item Increment the free-slot counter for bin $h_i(k)$, and add $j$ to the bin's free list.
\end{enumerate}

\paragraph{The main challenge: avoiding allocation failures. }
In this basic design, an allocation may fail if both of the bins $h_0(k)$ and $h_1(k)$ are fully occupied.
The key design principle behind a dereference table is to allocate tiny pointers in such a way that these failures almost never happen.

To minimize allocation failures, our proposed design relies heavily on a probabilistic phenomenon known as the \defn{power of two choices}~\cite{10.1137/S009753970444435X,DBLP:journals/corr/TalwarW13,doi:10.1137/S0097539795288490,bansal2022balancedallocationsheavilyloaded}.
This says that, if $a$ balls are placed into $b$ bins, and if each ball is placed in the emptier of two \emph{random} bins, then (with very high probability) all of the bins will have very similar loads (no bin will have more than $a/b + \log \log b + O(1)$ balls).
Unlike many theoretical phenomena, the power-of-two choices has very good real-world constants.
\rev{Importantly, allocation failure probability is not simply a per-operation constant; it guarantees that no failure occurs w.h.p. for polynomially many operations.}\mymarginpar{R4W1}\mymarginpar{R4D3}\mymarginpar{R4D4}

Our implementation reaches a sustained load factor well beyond the \deletionincludedloadfactorpercent\%, suggested by analyses in Section~\ref{subsec:load_factor}.

\paragraph{Difference between our implementation and the theory}
In~\cite{tinypointersSODA}, dereference tables are designed as a complex structure with multiple hash-table-like levels.
The bits of each tiny pointer are used to both identify the hash functions and the index in the substructure of hash tables.
This design involves complex multi-level load-balancing as well as careful partitioning of the tables, which makes it difficult and inefficient to implement in practice.
It further uses exhaustive tabulation (also known as the method of Four Russians) to implement some operations in constant time, which does not work well in practice\footnote{\rev{Our implementation's average throughput is \tpoursspeedupmin$\times$--\tpoursspeedupmax$\times$ that of a prior third-party implementation~\cite{tinypointers_github} directly following the theoretical work~\cite{tinypointersSODA}.}}.\mymarginpar{R2D4}
% \xilin{Experimental results suggest that the average throughput of our design is from \tpoursspeedupmin$\times$ to \tpoursspeedupmax$\times$ of the throughput of simply following prior theoretical work.}
% \alex{My problem with saying this here is that we're not including those results in the paper, so I'm not totally comfortable making this claim. I think we can either omit it or leave it as a footnote where we cite the github and say this.}
% \cred{Moved to footnote.}
% \alex{Could add something here about the differences between our design and the github one, or just the performance differences.}

% \paragraph{A simple example: chained hash tables}
\subsection{Example: Chained Hash Tables}\label{subsec:chained_ht_example}

We now demonstrate how tiny pointers can be used in the context of a simple chained hash table.
%To \xilin{instantiate} tiny pointers in the context of hash tables, we use a running example: a chained hash table.
This example will take us part of the way towards understanding \htone, which will go further by integrating quotienting.
% We assume that our hash table stores 64b keys and 64b values and holds up to $n$ key-value pairs.

% In our hash table example, $s$ should be the size of a node (128 bits + the size of a pointer), and $n$ should be the hash table's capacity.
%In Section~\ref{subsec:resizing}, we'll discuss how to handle resizing.
% (In this basic construction, tiny pointers must know $n$ and $s$ up front---in Section~\ref{subsec:resizing}, we'll discuss how to handle resizing.)

Having initialized the dereference table, we can now allocate nodes for the hash table.
In a chained hash table, each node lies in a linked list and is referenced by a unique tiny pointer.
\rev{To allocate a node, we call \texttt{Allocate} with a unique ID for that node, and store the returned tiny pointer in the appropriate place (e.g., the head array for the bucket, or the next pointer of another node).}
\mymarginpar{R1D2}

%To allocate the node corresponding to a pointer with address $A$, we call \texttt{Allocate}$(A)$, which allocates an $s$-byte object in the dereference table and returns a 1-byte tiny pointer $p$ to store at $A$.
%So for example, if $A$ is the head-pointer address for a bucket, it holds the tiny pointer $p$ for the first node $u$ in its chain.
%Address $A$ plays a crucial role here: \texttt{Allocate} requires an \defn{ID} for that tiny pointer, which is a user-specified name for the object being allocated.
%This name must be unique across objects in the dereference table, and is required whenever operating on the tiny pointer, e.g., \texttt{Dereference}$(A,p)$ to obtain a 64-bit pointer to $u$ or \texttt{Free}$(A,p)$ to release $u$.

One might be tempted to use the key $k$ stored in the node as the unique ID.
This works when $k$ is the first node in the chain.
However, if another key $\ell$ is stored in the first node, the head tiny pointer $p$ will be allocated with ID $\ell$, so calling \texttt{Dereference}$(k, p)$ will not return the correct node (or even necessarily a valid node).
Therefore it is important that the IDs for the tiny pointers in each chain be computable from the bin index and the chain prefix.
Moreover, the ID must be unique across objects in the dereference table, and is required whenever operating on the tiny pointer, e.g., \texttt{Dereference}$(k,p)$ to obtain a 64-bit pointer to $u$ or \texttt{Free}$(k,p)$ to release $u$.
This highlights an important general challenge to using tiny pointers: determining unique and computable IDs.

\mymarginpar{R1D2}
\rev{
    In this hash-table example, we solve this challenge by using the address $A$ of each tiny pointer $p$ as its unique ID $k$.
    This means that the two bins where the node pointed to by $p$ can be allocated are $h_0(A)$ and $h_1(A)$, where $h_0$ and $h_1$ are the two hash functions used in our dereference table implementation.
    Notice that this ID satisfies both the uniqueness and computability requirements: each tiny pointer has a unique address, and the address is known whenever the tiny pointer is used.
    See Figure~\ref{fig:tp_ht_together} for an example query.
}
\begin{figure}[h]
    \centering
    \includegraphics[width=1\linewidth]{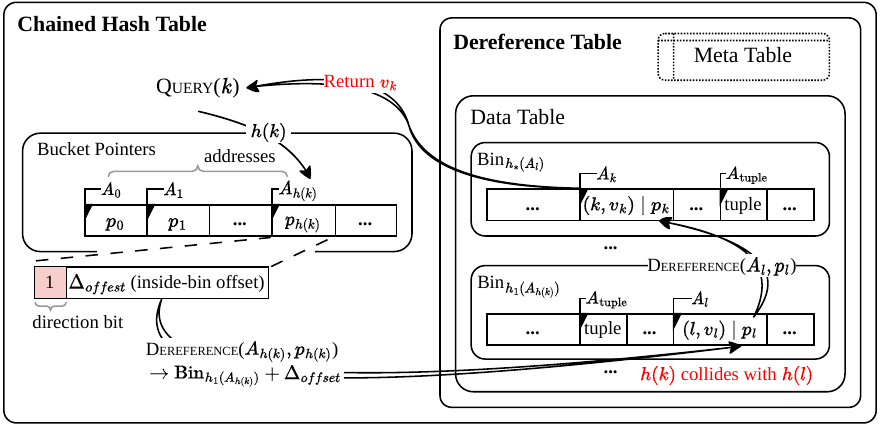}
    \caption{\rev{Query for key $k$ in the chained hash table with tiny pointers. The direction bit of the first tiny pointer is 1 in this example, and leads to the bin corresponding to $h_1(A_{h(k)})$ in our dereference table implementation. The meta table is not used during queries.}}\label{fig:tp_ht_together}
\end{figure}
% \mymarginpar{R1D2}

%\xilin{For example, in the chained hash table example, the owner ID $k$ used for these operations is exactly the address of the tiny pointer, and the tiny pointer $p$ then points to the node.}

\section{Quotienting}\label{sec:quotienting}

The idea behind quotienting is to use the bin index to implicitly store part of each key.
If the keys themselves are random or if it's sufficient to store fingerprints rather than the original keys, then the high-order bits of the random key or fingerprint can be directly used to determine the bin index.
Then these high-order bits do not need to be stored themselves and can be reconstructed from the bin index when needed.

The use of quotienting in hash tables was suggested by Knuth~\cite{10.5555/280635}, and has been applied in quotient filters~\cite{quotientfiltersVLDB,quotienthashtablesSIGAPP,10.1145/3448016.3452841,10.14778/3523210.3523211}, which can be interpreted as quotienting hash tables which store fingerprints rather than keys.
There is a literature of hash tables that use quotienting in theory to achieve memory-efficiency~\cite{10.1145/3625817,Arbitman2009BackyardCH,quotienthashtablesSIGAPP}\rev{, and some of these have practical implementations~\cite{2019kopple,10.1109/TC.1984.1676499,poyias2017mbonsaipracticalcompactdynamic,10.1007/s00453-022-00996-y,compactparallelhashtablesongpu}}.
\mymarginpar{R2D1}

The difficulty with using quotienting is that keys which share a bin don't have to share a prefix.
One possible way to handle this would be to use an invertible hash function---a random permutation---to replace each key with a fingerprint in a way that preserves the in the original key.
Such an approach could, in principle, allow quotienting to be used as described above.
The problem is that, in general, no known techniques to generate random permutations are viable in practice, and in fact, there aren't even techniques to do so in theory~\cite{broder1998min}.

To get around this, we use a construction called a one-round Feistel permutation~\cite{10.5555/548089}.
Rather than trying to emulate a fully random permutation hash function, this construction makes a specified bit range of the key ``look random'' and then uses that range for quotienting.
Feistel permutations have been used in theoretical hash-table designs~\cite{Arbitman2009BackyardCH}\rev{, as well as in a GPU implementation~\cite{compactparallelhashtablesongpu}}.
\mymarginpar{R2D1}

\subsection{One-Round Feistel Permutations}\label{subsec:hashfunc}

For a key $k$, set $k = \qpre \circ r$, where $\qpre$ is the prefix of $k$ that we will quotient off, $r$ is the remainder suffix.
Let $w$ be the word length and $2^{w-\ell}$ be any suitable divisor.
Starting from any hash family $\mathcal{H}$ of functions $h:[2^w]\!\to\![2^{\ell}]$, we define $\mathcal{H}_q=\{h' \mid h'(\qpre \circ r)=h(r)\oplus \qpre , h\in\mathcal{H}\}$, where $\oplus$ denotes XOR.
Intuitively, $h'$ infuses randomness from $h(r)$ into the prefix of the key.
\mymarginpar{R1D6}
\rev{Importantly, if two keys have the same $h'$ value, they can be compared using only the stored remainder, and each key can be reconstructed from the remainder and $h'$.}
We realize the one-round Feistel permutation when calculating $h(r) \oplus \qpre$\rev{, which we refer to as the \defn{quotient}}.
\rev{Algorithm~\ref{alg:quotienting} shows this quotienting process.}
% We now establish the properties of the induced family $\mathcal{H}_q$ used in our design.
\mymarginpar{R1D2}

\rev{
\begin{algorithm}[h]
  \caption{\textsc{One-Round Feistel Quotient}}\label{alg:quotienting}
  \small
  \begin{algorithmic}[1]
    \Require Key $k=\qpre \circ r$, hash function $h$
    \Procedure{\texttt{Quotient}}{$k$}
    \Return $q \gets h(r) \oplus \qpre$
    \EndProcedure
    % \Statex
    \Procedure{\texttt{RecoverKey}}{$q, r$}
    \Return $k \gets (q \oplus h(r)) \circ r$
    \EndProcedure
  \end{algorithmic}
\end{algorithm}
}

Note that $\mathcal{H}_q$ need not preserve the independence properties of $\mathcal{H}$.
In particular, even if $\mathcal{H}$ is fully random, $k_1 = \qpre_1 \circ r$ and $k_2 = \qpre_2 \circ r$ will never collide if $\qpre_1 \neq \qpre_2$.
Therefore $\mathcal{H}_q$ is not even pairwise independent.
Following~\cite{doi:10.1137/0217022}, we show that $\mathcal{H}_q$ is universal and has strong distributional properties---in particular, Chernoff bounds hold and so its max load is the same as $\mathcal{H}$'s.

\begin{claim}
  $\mathcal{H}_q$ is universal when $\mathcal{H}$ is pairwise independent.
\end{claim}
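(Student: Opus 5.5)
Fix two distinct keys $k_1 = \qpre_1 \circ r_1$ and $k_2 = \qpre_2 \circ r_2$, with $\qpre_i \in [2^{\ell}]$ and $r_i$ the remainder suffixes. Universality means
\[
\Pr_{h \in \mathcal{H}}\bigl[h'(k_1) = h'(k_2)\bigr] \le 2^{-\ell},
\]
where $h'(\qpre \circ r) = h(r) \oplus \qpre$. The collision event is $h(r_1) \oplus h(r_2) = \qpre_1 \oplus \qpre_2$. I would split into two cases according to whether the remainders agree.

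\emph{Case $r_1 = r_2$.} Since the keys are distinct, we must have $\qpre_1 \neq \qpre_2$. The collision condition becomes $0 = \qpre_1 \oplus \qpre_2$, which is false. So the collision probability is $0$ for every $h$. This is exactly the non-pairwise-independence remark made just before the claim.

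\emph{Case $r_1 \neq r_2$.} Here I would invoke pairwise independence of $\mathcal{H}$ (implicitly also uniformity of each $h(r)$ over $[2^{\ell}]$). It gives that $(h(r_1), h(r_2))$ is uniform over $[2^{\ell}]^2$. Condition on $h(r_1) = a$. Then $h(r_2)$ is still uniform, and collision requires $h(r_2) = a \oplus \qpre_1 \oplus \qpre_2$, a single fixed value. So the conditional probability is exactly $2^{-\ell}$, and averaging over $a$ gives $2^{-\ell}$.

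Combining the two cases, every distinct pair collides with probability at most $2^{-\ell}$, so $\mathcal{H}_q$ is universal.

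The only step needing care is a bookkeeping one: the domain of $h$ versus the remainder. The paper writes $h : [2^w] \to [2^{\ell}]$ but applies it to $r$, which has only $w - \ell$ bits. I would note that $r$ embeds injectively into the domain, so distinct remainders stay distinct inputs and pairwise independence applies. I would also state explicitly that the quotient length equals the output length $\ell$, so that the XOR is well defined. There is no real obstacle beyond this.
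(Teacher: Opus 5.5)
Your proof is correct and matches the paper's argument, which rewrites the collision event as $h(r_1) = h(r_2) \oplus \qpre_1 \oplus \qpre_2$ and bounds its probability by $2^{-\ell}$ using pairwise independence. Your explicit case split tightens what the paper's one-liner leaves implicit: pairwise independence only applies when $r_1 \neq r_2$, the case $r_1 = r_2$ has collision probability $0$, and you rightly note that uniformity of each $h(r)$ is also needed.
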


\begin{proof}
  For distinct keys $k=\qpre_1 \circ r_1$ and $k'=\qpre_2 \circ r_2$, pairwise independence of $\mathcal{H}$ implies
  \[
    \Pr_{h'\sim\mathcal{H}_q}
    \bigl[h'(k)=h'(k')\bigr]
    =\Pr_{h\sim\mathcal{H}}\bigl[h(r_1)=h(r_2)\oplus \qpre_1 \oplus \qpre_2\bigr]
    \le 2^{-\ell}.
  \]
\end{proof}

% We now show that Chernoff-type concentration holds for $\mathcal{H}_q$, when $\mathcal{H}$ is mutually independent.
% Intuitively, outcomes that are guaranteed to differ (e.g., $0\circ0$ vs. $1\circ0$ produce $h(0)$ and $h(0)\oplus1$) cannot collide, but the hash family behaves otherwise like an independent one.

\begin{claim}
  If $\mathcal{H}$ is mutually independent, then $\mathcal{H}_q$ has expected max load $\Theta(\log{n}/\log\log{n})$.
\end{claim}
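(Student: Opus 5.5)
The plan is to decompose the key set by remainder and reduce to independent sums. Fix a set $S$ of $n$ distinct keys and $m=2^{\ell}$ bins, with $n=\Theta(m)$. Group the keys by remainder: for each remainder $r$, let $Q_r=\{\qpre : \qpre\circ r\in S\}$. Under a mutually (fully) independent $h$, the values $h(r)$ for distinct $r$ are independent and uniform on $[2^\ell]$. Keys in the same group are mapped to $h(r)\oplus Q_r$, which is a uniform random translate (under XOR) of the fixed set $Q_r$. These keys land in pairwise distinct bins, since $\qpre\mapsto h(r)\oplus\qpre$ is injective.

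For a fixed bin $b$, the load is $L_b=\sum_r X_r$, where $X_r=\mathbf{1}[b\oplus h(r)\in Q_r]$. Injectivity is what makes $X_r$ an indicator: each group contributes at most one key to bin $b$. The $X_r$ are independent Bernoulli variables with $\Pr[X_r=1]=|Q_r|/2^\ell$. Hence $\mathbb{E}[L_b]=n/2^\ell=O(1)$.

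\textbf{Upper bound.} $L_b$ is a sum of independent $\{0,1\}$ variables with constant mean, so the standard Chernoff bound $\Pr[L_b\ge t]\le (e\mu/t)^t$ applies verbatim. Choosing $t=c\log n/\log\log n$ gives probability $\le n^{-2}$. A union bound over the $2^\ell=O(n)$ bins then shows the max load is $O(\log n/\log\log n)$ with probability $1-O(1/n)$. The max load is trivially at most $n$, so the expected max load is $O(\log n/\log\log n)$.

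\textbf{Lower bound.} This is the step I expect to be the main obstacle. Chernoff only controls upper tails, and the correlations across bins are a new feature: keys sharing a remainder are forced apart. In the worst case all keys could share one remainder, making the max load $1$. So the $\Omega$ part cannot hold for adversarial key sets, and I would state it for the regime the paper implicitly considers. That regime is keys whose remainders are distinct, or more generally groups of size $O(1)$, e.g.\ random keys, where $\mathcal{H}_q$ coincides with a fully random hash. Here I would use the standard balls-into-bins lower-bound argument. Consider a family of well-separated bins and the events $L_b\ge t$ for $t=c'\log n/\log\log n$. Show $\Pr[L_b\ge t]\ge n^{-1/2}$ via the lower tail estimate $\binom{n}{t}p^t(1-p)^{n-t}$. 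Then show by a second-moment (or Poissonization plus negative association of occupancy indicators) argument that some bin reaches load $t$ with constant probability. If groups are larger but bounded, group loads behave like balls of bounded multiplicity, and the same estimate goes through with constants changed.

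In writing, I would present the upper bound as the substantive claim, matching the paper's purpose that the max load of $\mathcal{H}_q$ is no worse than that of $\mathcal{H}$. The matching $\Omega$ bound would be stated under the bounded-group-size assumption, reducing to the classical result for fully random hashing.
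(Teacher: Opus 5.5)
Your argument is correct, and for the upper bound it reaches the paper's conclusion by a more elementary route. The paper fixes a bin $b$ and asserts that the per-key indicators $\mathbf{1}[h'(k_i)=b]$ are negatively associated. It then invokes a Chernoff bound for negatively associated variables, followed by the standard max-load calculation. You instead merge each remainder group into a single indicator $X_r$. This is legitimate exactly because $\qpre\mapsto h(r)\oplus\qpre$ is injective, so $L_b$ becomes a sum of genuinely independent Bernoulli variables and the textbook Chernoff bound applies without any negative-association machinery. Your decomposition is in fact what justifies the paper's claim: within a group the indicators are mutually exclusive, hence negatively associated by the zero-one lemma, and distinct groups are independent. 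The paper's route is shorter given the cited tool. Yours is self-contained and also exposes the structure that breaks the other half of the statement.

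On the lower bound you are right, and the paper's proof does not cover it. Chernoff bounds control only upper tails, and nothing in the paper's argument yields $\Omega(\log n/\log\log n)$. Your example settles the matter: take $n\le 2^\ell$ keys with a common remainder. Then $h'$ is injective on them and the max load is $1$ deterministically. So the $\Theta$ claim fails for adversarial key sets. This is consistent with the paper's own observation that sequential keys are spread perfectly evenly. In general only the $O(\log n/\log\log n)$ bound holds, and the matching lower bound needs an assumption like yours.

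For the bounded-group case you can drop the ``balls of bounded multiplicity'' heuristic, since keys of one group land in distinct bins, not the same one. Instead, pick one representative per remainder group. The representatives hash independently and uniformly, and there are at least $n/s$ of them when groups have size at most $s$. Adding the remaining keys can only raise loads, so the classical lower bound for $\Theta(m)$ fully random balls transfers directly.
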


\begin{proof}
  The indicators $X^{(b)}_i=\mathbf{1}[h'(k_i)=b]$ (with $h'\sim\mathcal{H}_q$) are negatively associated, so standard Chernoff bounds apply to $X^{(b)}=\sum_i X^{(b)}_i$\cite{Wajc2017NegativeA}.

  Thus, by the standard result~\cite{10.5555/646975.711521}, the Chernoff bounds imply that the expected max load is $\Theta(\log{n}/\log\log{n})$.
\end{proof}

In practice we instantiate $\mathcal{H}$ with \textsf{xxHash}~\cite{xxhash_2024}.

\section{Chained Tiny Pointer Hash Table}\label{sec:htone}

In this section, we present a chained hash table that fuses \emph{tiny pointers} with \emph{quotienting}.
This design approaches the information-theoretic space lower bound.
The resulting structure is highly space-efficient; as demonstrated in~Section~\ref{subsec:space_bound}, its footprint can even fall below the size of the stored data.

At a high level, \htone is a chained hash table that stores all nodes in a single high-load-factor dereference table.
Further, \htone uses quotienting to reduce the size of stored keys.

\subsection{Design of \htone}\label{subsec:chaining}

This section explains how quotienting and tiny pointers combine to form our chained variant, \htone.

Let $n$ denote the number of stored tuples and let $2^{w-\ell}$ be the divisor chosen for quotienting.
We refer to the set of keys with a given quotient $q$ as $q$'s \defn{quotient group}.
In \htone, chains precisely correspond to quotient groups, so we will sometimes refer to chains and quotient groups interchangeably.
The data structure comprises two tables:
\begin{itemize}
    \item \textbf{Head array} ($2^{\ell}$ entries).
          Each cell holds a tiny pointer to the first node in the chain for the corresponding quotient group (or \texttt{NULL} if the group is empty).
    \item \textbf{Dereference table} ($\approx n$ slots).
          Each slot stores a triple $(p,r,v)$ consisting of the tiny pointer that links to the next element, the quotiented key (remainder), and the value.
          In the ideal case this takes $8-\ell+2w$ bits per tuple; we provision a small slack to stay below the critical load.
\end{itemize}

\paragraph{Chain indentification}
Figure~\ref{fig:quotient_hash} illustrates the data path: for a key $k$ we compute its quotient using the hash function from Section~\ref{subsec:hashfunc}.
The quotient is used to determine which chain $k$ belongs to, and in particular is used as the index in the head array.
When accessing keys in that chain, we undo the quotienting by appending the quotient $q$ to the stored tuple $(r,\textit{value})$.

\begin{figure}[h]
    \centering
    \includegraphics[width=0.9\linewidth]{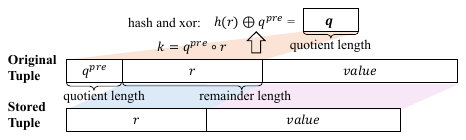}
    \caption{Quotienting pipeline: the hash determines the quotient and only the remainder is stored in the chain.}\label{fig:quotient_hash}
\end{figure}

\paragraph{Monolithic dereference table}
As in a traditional chained hash table, tuples in a chain (quotient group) are organized as a linked list.
All chain nodes across different chains are to be stored in a flat memory pool, which is a single dereference table.
So, even though the chains are logically partitioned, they are physically intermingled in a single table.

\paragraph{Tiny pointer IDs in \htone}
As discussed in Section~\ref{subsec:tinyptrderef}, there is some subtlety in how IDs are associated to tiny pointers in a chained hash table.
In particular, recall that when allocating a node, we cannot use the key $k$ that it stores as the ID.
Instead for the ID we use the address of the unique [tiny] pointer in the list which points to the node.
Importantly these IDs are unique and known at dereference-time (since we obtain the tiny pointer itself from its address).

\paragraph{Hash table operations}
The procedure for inserting a key-value pair $(k,v)$ is detailed in Algorithm~\ref{alg:chained_insert}.
First, the key $k$ is split into a prefix $\qpre$ and a suffix $r$ to compute a bucket index (quotient) $q = h(r) \oplus \qpre$.
Denoting the head array by $B$, we then traverse the linked list starting from $B[q]$, using the in-place address of each tiny pointer's location as its ID for dereferencing.
Upon reaching the end of the chain, a new entry is allocated in the dereference table and linked into the list.
The new entry is then populated with the quotient, the value, and a \texttt{NULL} tiny pointer.

Queries, updates, and deletes follow a similar procedure to locate the target entry by traversing the appropriate chain.
Once found, the operation is completed by interacting with the dereference table via the interfaces from Section~\ref{subsec:tpimpl}.

\newcommand{\Null}{\texttt{NULL}}
\newcommand{\Addr}{\operatorname{addr}}

\begin{algorithm}[h]
    \vspace{1pt}
    \caption{\textsc{Insert}}\label{alg:chained_insert}
    \small
    \begin{algorithmic}[1]
        \Require Global head array $B$, key $k=\qpre \circ r$, value $v$
        \State $q \gets h(r) \oplus \qpre$ \Comment{Quotient group identifier}
        \State $p^{\text{tp}} \gets \Addr(B[q])$ \Comment{Pointer to the tiny pointer}
        \State $k^{\text{deref}} \gets \Addr(B[q])$ \Comment{In-place key for dereference}
        \While{$*p^{\text{tp}} \neq \Null$} \Comment{Follow to the end of the chain}
        \State $e \gets \texttt{Dereference}(k^{\text{deref}}, *p^{\text{tp}})$
        \State $p^{\text{tp}} \gets \Addr(e.p);\enspace k^{\text{deref}} \gets \Addr(e.p)$ \Comment{Advancing}
        \EndWhile
        \State $e,\; *p^{\text{tp}} \gets \texttt{Allocate}(k^{\text{deref}})$ \Comment{Reserve new entry and link it}
        \State $e.k \gets r;\enspace e.v \gets v;\enspace e.p \gets \Null$
    \end{algorithmic}
\end{algorithm}

\subsection{Performance Advantages}

Beyond space efficiency, \htone offers two key performance advantages over conventional chaining methods: enhanced cache locality and shorter average probe chains.

\paragraph{Enhanced cache locality}
A conventional chained hash table relies on a head array of 8-byte pointers to locate the head of each key group's collision chain.
Replacing these with 1-byte tiny pointers shrinks the head array's memory footprint by a factor of eight.
This size reduction greatly improves the chances that the array stays in CPU cache---in the best case, the entire head array can fit in cache.
Consequently, a higher fraction of lookups to find a chain's head become cache hits, reducing latency.

\paragraph{Shorter probe chains}
The average probe chain length is determined by the ratio of elements to key groups.
In traditional designs, expanding the head array to increase the number of key groups incurs a substantial space overhead and impacting cache efficiency.
The minimal footprint of tiny pointers essentially eliminates this trade-off.
With \htone, it's inexpensive to size the head array so as to maintain an element-to-group ratio at or below one by setting $2^{\ell}\ge n$.
This reduces the average chain length to one or less, allowing most lookups to complete in a single probe; we lower-bound this probability by $1-e^{-n/2^{\ell}}\ge 63.2\%$, which we also confirm empirically.
A minor limitation is diminishing returns from further enlarging the head array: while multi-element chains shrink slowly, a larger array can exceed cache and erode locality.

\paragraph{A remark on the symbiosis between tiny pointers and quotienting}
We remark that part of the reason we are able to practically use quotienting in \htone is because chained hash tables, \emph{a priori}, are much easier to perform quotienting on than are open-addressed hash tables (which require carefully managed metadata for quotienting to be possible).
Historically, the compatibility between chaining and quotienting was viewed as essentially pointless, since chaining pays so much space overhead for pointers.
However, with the introduction of tiny pointers, we are able to get the best of both worlds---the low metadata overhead of open addressing, with the convenient quotienting of chaining.

\subsection{Space Analysis}\label{subsec:space_bound}

We now show that \htone is nearly space optimal.

How small \emph{can} a hash table be?
Consider the word-RAM model with a word size of $w$ bits ($w=64$ on most contemporary machines).
Storing $n$ key-value pairs verbatim consumes $2n$ words, or $2nw$ bits.
Yet a hash table represents only the \emph{set} of keys without regard to their order.

Assuming all keys are distinct, the number of possible key sets is $\binom{2^{w}}{n}$.
Any representation therefore requires at least
\[
    \begin{aligned}
        % \log \binom{2^w}{n} & = 2^w\, H_2(p) - \frac{1}{2} \log \bigl(2 \pi n(1-p) \bigr) + o(1) \\
        %                     & \approx n\bigl(w-\log n\bigr) + n\log e + o(n)
        \log \binom{2^w}{n} \approx n\bigl(w-\log n\bigr) + n\log e + o(n)
    \end{aligned}
\]
% bits, where the approximation follows from Stirling's formula and $p=n/2^w$, $H_2(p)=-p\log\frac{1}{p}-(1-p)\log\frac{1}{1-p}$.
bits, where the approximation follows from Stirling's formula.
It holds with approximately 1\% error when $1000n\le 2^{w}$.
Relative to the na\"{\i}ve $nw$-bit allocation, the entropy bound saves roughly $\log n$ bits \emph{per element}.

For \htone, putting the pieces together, besides each $w-bit$ value, each key stores its quotient in $w-\log n$ bits and one 8-bit tiny pointer; the head array contributes $8n$ bits and the meta table $16n/127$ bits. Accounting for load factor $1-\delta$ (empirically $\delta \approx 2\%$), the total footprint is
\[
    \frac{n\bigl(2w-\log n + 8 + 16/127\bigr)}{1-\delta} + 8n \;\text{bits},
\]
which is nearly optimal relative to the entropy bound while preserving constant-time operations. \rev{We remark that, from a theoretical perspective, tiny pointers of $O(\log \log n)$-bits support a dereference table with load factor of $1- \delta = (1 - 1/ \log n)$ \cite{tinypointersSODA}. This means that, in the standard parameter regime of $w = O(\log n)$, \htone uses $B + O(n\log \log n)$ bits, where $B = \log \binom{2^w}{n}$ is the information-theoretically optimal space usage.}

% \mymarginpar{R2D3}
% \xilin{Compared with other compact hash tables, \htone may pay a larger constant-factor $n$-bit overhead for certain baseline design choices, but evaluation (Section~\ref{subsec:compact_ht_comparison}) shows substantially higher performance. Primitive bidirectional linear probing~\cite{10.1109/TC.1984.1676499} uses $n(2w-\log n+5)/(1-\delta)$ bits but the displacement needed to recover the key hash typically impedes performance. Layered designs~\cite{poyias2017mbonsaipracticalcompactdynamic} use $n(2w-\log n+3+\frac{14}{9}+o(1))/(1-\delta)$ bits by storing displacement information across multiple levels. Bucketing~\cite{10.1007/s00453-022-00996-y} uses $n(2w-\log n+O(\log\log w))$ bits but is significantly slower due to frequent bucket reallocations.}

\mymarginpar{R2D3}
\rev{Table~\ref{tab:compact_ht_space_op} summarizes the space and operation complexity for compact hash tables.
Here, instantiation means the practical structure that results from adapting the theoretical design in parameters and structure (e.g., fitting a $\log \log n$ component into a byte).
Notably, at load factor $1-\delta$, the insertion (and sometimes query) cost of prior compact hash tables depends inversely on $\delta$, so at near-optimal space $(1+o(1))B$ they incur $\omega(1)$ or $O(\log n)$ operation.
\htone, by contrast, achieves both $(1+o(1))B$ space and $O(1)$ operations, making it a simple and practical \emph{succinct} hash table.
}

\begin{table}[h]
    \centering
    \caption{\rev{Asymptotic and instantiation space redundancy over the approximate information-theoretic optimum $n(2w - \log n)$ in bits and expected operation complexity for compact hash tables. In this table only, load factor is written as $1/(1+\bar\delta)$ (elsewhere $1-\delta$) to keep the formulas compact. Here $w=O(\log n)$. For \htone, $\bar\delta^*=o(1)$, distinct from the $\bar\delta=\Theta(1)$ used for the other methods.}}\label{tab:compact_ht_space_op}

    \resizebox{\columnwidth}{!}{%
        \tightsize
        \begin{tabular}{lllcc}
            \hline
            \textbf{Method}                                         & \textbf{Space Redundancy} & \textbf{Instantiation Redundancy}                    & \multicolumn{1}{l}{\textbf{Insert}}  & \multicolumn{1}{l}{\textbf{Query}} \\ \hline
            \htone                                                  & $O(n\log \log n)$         & $\bar\delta^*(w + 8 + 16/127)n + 8n$   & \multicolumn{2}{c}{$O(1)$}                                                \\
            Cleary~\cite{10.1109/TC.1984.1676499}                   & $O(\bar\delta n \log n)$      & $\bar\delta(w+5)n$                   & \multicolumn{2}{c}{$O(\bar\delta^{-2})$}                                      \\
            Layered~\cite{poyias2017mbonsaipracticalcompactdynamic} & $O(\bar\delta n \log n)$      & $\bar\delta(w+3+14/9+o(1))n$ & $O(\bar\delta^{-2})$                     & $O(\bar\delta^{-1})$                   \\
            Group~\cite{10.1007/s00453-022-00996-y}                 & $O(n)$                    & $(4+1+1)n$                             & $O(\log n)$                          & $O(1)$                             \\
            Bucket~\cite{10.1007/s00453-022-00996-y}                & $O(n\log \log n)$         & $(8+1)n$                             & \multicolumn{2}{c}{$O(\log n)$}                                           \\ \hline
        \end{tabular}%
    }
\end{table}

\section{Flattened Tiny Pointer Hash Table}\label{sec:httwo}

While \htone prioritizes space efficiency with chaining and tiny pointers, \httwo aims to push the speed-space Pareto frontier.
To do so, \httwo trades some compression headroom for better cache locality and therefore better latency.

Specifically, we focus on three key improvements:
\begin{itemize}
    \item \textbf{Eliminating chain indirection:} We remove the level of indirection required to access chains, thereby reducing memory accesses and latency.
    \item \textbf{Flattening chains:} We replace the linked-list traversal in collision chains with a compact, cache-friendly data layout, which minimizes cache misses and improves locality.
    \item \textbf{Vectorizing key comparison:} We leverage SIMD instructions to parallelize the key comparison process during lookups, significantly accelerating query throughput.
\end{itemize}
These optimizations collectively enable \httwo to achieve performance that is comparable to, and in some cases surpasses, current state-of-the-art hash tables.

\subsection{\httwo Design}\label{subsec:flatten}

To minimize memory access latency, \httwo employs a cache-efficient flattened data layout.
% To minimize memory access latency, \httwo employs a flattened data layout that is designed to maximize cache efficiency.
Specifically the layout is designed with three goals in mind: eliminating head pointer indirection, maximizing cache line locality, and flattening collision chains.

\paragraph{Head pointer elimination}
Accessing a quotient group in \htone via its base pointer array incurs an extra memory indirection.
In \httwo, rather than using a head array, we use a \defn{home array}.
In the home array, entries are \defn{home blocks}, which store up to a fixed number of \defn{home tuples}, as well as pointers to \defn{overflow tuples}.
One can think of this as removing the home tuples from the dereference table and storing them directly in the array.
As in the home array, each home block corresponds to a quotient $q$, so that all keys with quotient $q$ can be found in (as a home tuple) or via (as an overflow tuple) that home block.

\paragraph{Cache line locality}
We carefully structure the home blocks so that they fit in aligned 64-byte cache lines.
This allows us to fit up to 4 tuples directly in the home block.
With sufficiently large quotients, there is still room for metadata and tiny pointers to overflow tuples.
However, all we really need is to quotient at least a single byte from each key.
This means that $\ell-2 \ge 8$, implying a minimum of $2^{10}=1024$ data inputs, a reasonable scale.

Here is a place where \httwo incurs two sources memory overhead.
First, at reasonable load factors, there will be a significant number of underfull home blocks.
Second, in order to fill these home blocks, we want to average around 4 keys per quotient group.
This means that we can quotient off 2 fewer bits than before.

Alternatively, one can use a single home block for multiple quotient groups and thus quotient off more bits. However, this would require a larger metadata overhead to identify the inside quotient groups and also prohibits the following chain flattening optimization to improve tail latency.

\paragraph{Chain flattening}
So far we haven't discussed how to handle overflow tuples.
We could use a chained design as in \htone, but then traversing the chains would require multiple memory hops, which can significantly degrade tail latency.
Instead, we flatten the chain by embedding pointers to all overflow tuples of a quotient group directly within the home block.
This ensures that accessing any element in a collision chain requires at most one additional cache miss.
In cases where a quotient group has more members than fit in the home block, we migrate one resident key-value pair to the dereference table and repurpose its slot to store a small vector of tiny pointers (one per overflow element).
The evicted pair is also referenced by this vector.
This keeps all pointers in the home blocks and avoids pointer chasing.

Note that even when 4 [home] tuples are stored in the home block, there is still room for multiple tiny pointers.
Indeed, suppose that we are quotienting at least 2 bytes per key, which happens when $\ell-2 \ge 16$, i.e. $n \ge 2^{18} = 262144$.
Then with 4 home tuples there are still 8 remaining bytes.
Excluding 2 of which we will use for metadata, we have 6 bytes that can hold 3 fingerprint-tiny-pointer pairs, so there must be more than 7 tuples sharing a quotient before a tuple needs to be migrated.
A Poisson-approximation analysis shows that this migration occurs with probability 5.1\%.

\subsection{Vectorized Lookups}

%To leverage the single-instruction-multiple-data (SIMD) capabilities of modern processors, we enhance our design with vectorized lookups.
%This optimization necessitates a modification to the cache-line layout detailed in Section~\ref{subsec:flatten}.

We now alter the home-block structure to leverage the SIMD capabilities of modern processors.

The core of our vectorized approach is to extract a one-byte fingerprint from each stored tuple.
Each home block stores an array of these fingerprints at the beginning of each line, one corresponding to each tuple, including both home and overflow tuples.

During a lookup, we first perform a vectorized comparison of the query key's fingerprint against all fingerprints in the array using a single SIMD instruction, such as \texttt{\_mm\_cmpeq\_epi8}.
A full key comparison is then performed only for those tuples whose fingerprints match.
This two-phase process effectively prunes the search space within a home block, significantly reducing the number of key comparisons and pointer traversals.

A important design consideration is the generation of these fingerprints.
A naive approach of extracting the fingerprint from the key's remainder (e.g., using its most significant byte) could lead to many fingerprint collisions if many keys have the same high-order bits.
Therefore, we enlarge the quotient length by another byte when applying the Feistel permutation in Section~\ref{subsec:hashfunc} and truncate the most significant byte for the fingerprint.

\subsection{Structure and Operations}\label{subsec:flattened_operations}

The layout of \httwo is shown in Figure~\ref{fig:flatten}, for cases with 4 home tuples and 3 home tuples (after migration due to overflow).

The layout includes the following components:
\begin{itemize}
    \item \textbf{\rev{Home array}}\mymarginpar{R2D5} ($n/4$ cache lines): Array of 64B home blocks. Each home block entry compacts fingerprints, tiny pointers, and quotiented key-value pairs together with metadata consisting of the number of fingerprints and tiny pointers, and concurrency control information.
    \item \textbf{Dereference table} ($\approx n/4$ slots by Section~\ref{subsec:failure-rate}). This unchanged from \htone (Section~\ref{subsec:chaining}), except that each entry is smaller due to the addition quotienting.
\end{itemize}

\begin{figure*}[ht]
    \centering
    \includegraphics[width=0.9\textwidth]{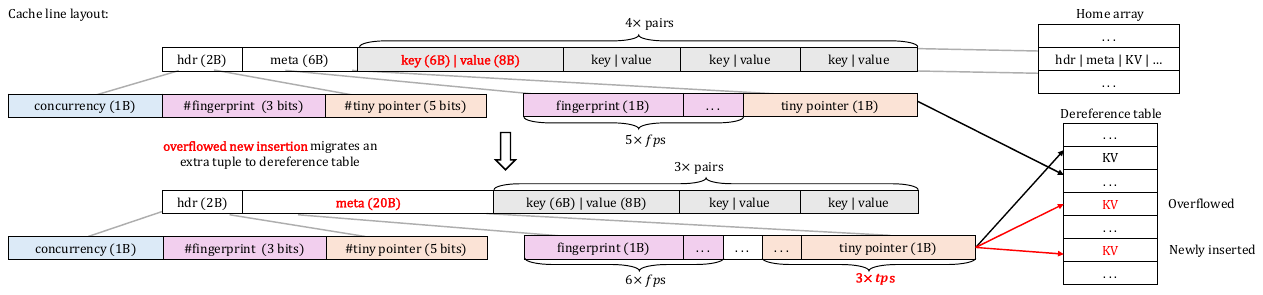}
    \caption{Flattened Data Layout of \httwo}\label{fig:flatten}
\end{figure*}

The procedure for inserting a tuple $(k, v)$ into \httwo is detailed in Algorithm~\ref{alg:flattened_insert}.
The key is hashed to a home block $b$, and then further quotiented to generate a fingerprint, which is stored in the fingerprint array.
The algorithm then proceeds with one of three insertion procedures based on the available space in the home block.
If sufficient space exists, the tuple is stored directly in the home block.
Otherwise, the tuple is added to the dereference table $A$, and a tiny pointer to it is stored in the home block.
If the home block is full, a home tuple is first evicted to the dereference table to make space for a pointer to the new entry.

Query, update, and deletion operations are similar, using vectorized SIMD instructions to scan fingerprints and identify candidate entries before performing a full key comparison.

\begin{algorithm}[h]
    \caption{\textsc{Insert}}\label{alg:flattened_insert}
    \small
    \begin{algorithmic}[1]
        \Require \rev{home array} $C$, key $k=\qpre_0 \circ r_0$, value $v$
        \State $b \gets h(r_0) \oplus \qpre_0$ \Comment{Target home block index}
        \State $\qpre_1 \gets r_0 \div 2^{w-\ell-8}$;\enspace $r_1 \gets r_0 \bmod 2^{w-\ell-8}$
        %  \Comment{Quotient \& remainder for fingerprint}
        \State $fp \gets h(r_1) \oplus \qpre_1$ \Comment{Fingerprint of the key}

        \State $L \gets C[b]$
        %  \Comment{Retrieve the home block}
        \State \texttt{AddFingerprint}$(L, fp)$ \Comment{Store fingerprint in the home block}

        \If{\texttt{HasInCacheSpace}$(L)$} \Comment{Case 1: Space for in-cache KV}
        \State \texttt{AddInCacheKV}$(L, r_1, v)$
        \ElsIf{\texttt{HasPointerSpace}$(L)$} \Comment{Case 2: Space for a tiny pointer}
        \State $e, tp \gets \texttt{Allocate}(L.tp.end());\enspace e.k \gets r_1;\enspace e.v \gets v$
        %  \Comment{Allocate for new entry}
        % \State $e.k \gets r_1;\quad e.v \gets v$
        % \Comment{Fill the new entry}
        \State \texttt{AddTinyPointer}$(L, tp)$
        % \Comment{Store pointer in home block}
        \Else \Comment{Case 3: Evict an in-cache KV to make space}
        \State $r^{\text{evict}}, v^{\text{evict}} \gets \texttt{EvictInCacheKV}(L)$
        % \Comment{Evict KV to free space}
        \State $e_1, tp_1 \gets \texttt{Allocate}(L.tp.end());\enspace e_1.k \gets r^{\text{evict}};\enspace e_1.v \gets v^{\text{evict}}$
        % \Comment{Allocate for evicted KV}
        % \State $e_1.k \gets r^{\text{evict}};\quad e_1.v \gets v^{\text{evict}}$
        \State \texttt{AddTinyPointer}$(L, tp_1)$
        %  \Comment{Store pointer to evicted KV}

        \State $e_2, tp_2 \gets \texttt{Allocate}(L.tp.end());\enspace e_2.k \gets r_1;\enspace e_2.v \gets v$
        %  \Comment{Allocate for new KV}
        % \State $e_2.k \gets r_1;\quad e_2.v \gets v$
        \State \texttt{AddTinyPointer}$(L, tp_2)$
        %  \Comment{Store pointer to new KV}
        \EndIf
    \end{algorithmic}
\end{algorithm}

\subsection{Failure Rate Analysis}\label{subsec:failure-rate}

While this flattened design improves cache performance, it introduces two new failure risks.
First, each home block has a fixed size and must accomodate all the data it stores either as home tuples or as tiny pointers to overflow tuples.
Second, the dereference table must be sized so that it can accomodate all overflow tuples, while maintaining high space utilization.

While bad events such as home block overflows and dereference-table exhaustion are possible, our probabilistic analysis verifies that these events do not occur with high probability across all workloads.
Specifically, we prove that home blocks operate well within their capacity limits w.h.p., and the dereference table can be conservatively sized to handle overflows without risk of exhaustion.

%\paragraph{Problem formulation}
%We model the hash table operation under worst-case conditions where the number of tuples is $n=2^\ell$, resulting in $n/4$ home blocks and a dereference table that stores tuples overflowed from home blocks.
%In our analysis, we assume the most challenging scenario where only one byte is saved from each key during compression, yielding key-value pairs of 14 bytes each (excluding the fingerprint).
%This conservative assumption ensures our bounds hold even under minimal quotienting.

\paragraph{Home block capacity analysis}
In particular, if sufficiently many tuples share a home block, even if they are all overflow tuples, there won't be enough space for their tiny pointers.
We refer to this event as a \defn{hard overflow}.
We now show that the probability of a hard overflow is negligible with a classical balls-and-bins analysis of $n$ keys (balls) into $m = 2^{\ell-2}$ home blocks (bins).

\begin{lemma}[Maximum Bin Size~\cite{10.5555/646975.711521}]
    Throw $n$ balls into $m$ bins i.u.d., where $m/\mathrm{polylog}(m) \le n \ll m \log m$.
    Let $M$ be the max load.
    Then, $M \le \frac{\log m}{\log \log m + \log m - \log n} (1 + o(1))$ w.h.p.
\end{lemma}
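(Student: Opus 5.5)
We bound the probability that any fixed bin has load at least $k$, take a union bound over the $m$ bins, and then choose the smallest $k$ that makes the total failure probability $o(1)$, or more strongly $m^{-c}$. This is the Raab--Steger first-moment argument.

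Fix a bin. Its load is $\mathrm{Bin}(n, 1/m)$ with mean $\mu = n/m$, so
\[
\Pr[\text{load} \ge k] \le \binom{n}{k} m^{-k} \le \left(\frac{e n}{k m}\right)^{k}.
\]
Union bounding over the $m$ bins gives
\[
\Pr[M \ge k] \le m \left(\frac{e\mu}{k}\right)^{k}.
\]
It therefore suffices to find $k$ with $k\bigl(\log k - \log \mu - \log e\bigr) \ge (1+\epsilon)\log m$.

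Now set
\[
k = \frac{\log m}{\log\log m - \log \mu}\,(1+\epsilon),
\]
noting that $\log\log m - \log\mu = \log\log m + \log m - \log n$, which matches the denominator in the statement. Write $D = \log\log m - \log \mu$ for this quantity. In the regime $m/\mathrm{polylog}(m) \le n \ll m\log m$, we have $\mu \le \mathrm{polylog}(m)$ and $\mu = o(\log m)$, so $D \to \infty$ while $D = O(\log\log m)$.

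The key computation is
\[
\log k - \log \mu = \log\log m - \log D + \log(1+\epsilon) - \log\mu = D - \log D + O(1) = D\,(1 - o(1)).
\]
This uses $\log D = o(D)$, which holds because $D\to\infty$. Hence
\[
k\bigl(\log k - \log\mu - \log e\bigr) = (1+\epsilon)\log m\,\bigl(1 - o(1)\bigr),
\]
which exceeds $(1+\epsilon/2)\log m$ for large $m$. This yields $\Pr[M \ge k] \le m^{-\epsilon/2}$, and letting $\epsilon \to 0$ slowly gives the $(1+o(1))$ factor with high probability.

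The main obstacle is the regime bookkeeping: we need $D\to\infty$ so that $\log D$ and the constant $\log e$ are lower order relative to $D$. This is exactly where the hypothesis $n \ll m\log m$ enters; when $\mu = \Theta(\log m)$ the bound breaks down, and Raab--Steger's other cases apply instead. The lower-bound regime $n \ge m/\mathrm{polylog}(m)$ ensures that $D = O(\log\log m)$ and that $k\to\infty$, so rounding $k$ to an integer is harmless.

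If we also want the bound "w.h.p." with a polynomially small failure probability rather than $o(1)$, we take $\epsilon$ to be a small constant. The $(1+o(1))$ expression then absorbs it, since $D\to\infty$ makes $\epsilon$ effectively vanish relative to the $\log D/D$ correction.
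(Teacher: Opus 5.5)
Your core argument is correct and is exactly the first-moment upper-bound half of the Raab--Steger analysis the paper cites; the paper gives no proof of its own beyond that citation. It proves the lemma with ``w.h.p.'' meaning probability $1-o(1)$, which is the sense of the cited theorem, once you make ``let $\epsilon\to 0$ slowly'' quantitative. Your correction term is $(\log D+\log e-\log(1+\epsilon))/D$, so take $\epsilon=\alpha\log D/D$ with any fixed $\alpha>1$. This gives $\Pr[M\ge k]\le 2^{-(\log m/D)((\alpha-1)\log D-O(1))}=o(1)$, since $\log m/D\to\infty$. That is precisely Raab and Steger's second-order term $1+\alpha\frac{\log D}{D}$. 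One small slip: the hypothesis $n\ge m/\mathrm{polylog}(m)$ gives $\mu\ge 1/\mathrm{polylog}(m)$, not $\mu\le\mathrm{polylog}(m)$. It is the former that yields $D=O(\log\log m)$, as you correctly say later.

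Your final paragraph, however, is false.
\begin{itemize}
\item A fixed constant $\epsilon$ gives the bound $(1+\epsilon)\frac{\log m}{D}$, not $(1+o(1))\frac{\log m}{D}$.
\item The justification is inverted. Because $\log D/D\to 0$, a constant $\epsilon$ dominates that correction rather than vanishing relative to it.
\end{itemize}
No other argument can rescue the claim, because your first-moment bound is essentially tight. Let $L_i$ be the load of bin $i$ and $p_k=\Pr[\mathrm{Bin}(n,1/m)\ge k]$. The indicators of the events $\{L_i\ge k\}$ are negatively associated, so $\Pr[M\ge k]\ge 1-(1-p_k)^m\ge\min\{1-e^{-1},\,mp_k/2\}$. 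At $k=(1+\epsilon)\frac{\log m}{D}$, your own computation gives $mp_k=m^{-\epsilon+o(1)}$. For the lower bound, run it with $\binom{n}{k}\ge(n/k)^k$ and $\mu=o(\log m)$. Hence for every $\epsilon=o(1)$ the failure probability is at least $m^{-o(1)}$, which is not $O(m^{-c})$ for any fixed $c>0$. A failure probability of $m^{-c}$ necessarily costs a factor of at least $1+c-o(1)$ in the bound. The lemma holds only in the $1-o(1)$ sense, so you should drop the last paragraph rather than claim the polynomially-small version.
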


Applying this lemma with $n$ keys and $m=2^{\ell-2}$ home blocks, we establish that the max load on any home block is at most $\frac{\ell}{\log \ell - 2} (1 + o(1))$ w.h.p.
For any dataset with fewer than $2^{64}$ items, this is fewer than 24 elements.
Each home block can hold up to 31 tiny pointers (considering
metadata and fingerprints), so this is well within the capacity.
Therefore, hard overflows are statistically negligible.

\paragraph{Dereference table sizing analysis}
Let $O$ denote the number of overflow tuples in an arbitrary home block.
By linearity of expectation, the expected total number of overflow tuples is $(n/4)\mathbb{E}[O]$.

If a home block has $x$ tuples, then $O = \max(0, x - 4 + \lfloor x/6 \rfloor)$.
The R.V.\ $x$ follows a binomial distribution, $x \sim \operatorname{Binomial}(n, 4/n)$, which we approximate by $\operatorname{Poisson(4)}$~\cite{10.5555/1076315}.
Using the fact that, for $x \sim \operatorname{Poisson(4)}$, we have $\Pr[X = i] = 4^i e^{-4} / i!$, we can conclude that
$$
    \mathbb{E}[O] \approx \sum_{i=5}^{\infty} (i - 4 + \lfloor i/6 \rfloor)\frac{4^i e^{-4}}{i!} < 0.997,
$$
which corresponds to an expected overflow ratio below 24.9\%.
This ratio improves with dataset size; when $n\ge 2^{16} = 65536$, this ratio decreases to less than 20.8\%.

We now use the Azuma-Hoeffding inequality to establish concentration bounds around the expected overflow size.

%\begin{theorem}[Azuma-Hoeffding~\cite{Azuma1967WEIGHTEDSO}]
%    Let ${(X_t)}_{t=0}^n$ be a martingale with respect to a filtration $(\mathcal{F}_t)$ and bounded differences: for all $t$, $|X_t - X_{t-1}| \le c$ almost surely. Then for any $\lambda>0$,
%    \[
%        \Pr\bigl[\,X_n - X_0 \ge \lambda\bigr] \le \exp\left( -\, \frac{\lambda^2}{2n c^2} \right).
%    \]
%\end{theorem}

\begin{claim}
    The overflow size of \httwo exhibits strong concentration around its expected value w.h.p.
\end{claim}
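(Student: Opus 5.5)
We treat the total overflow as a function of the $n$ independent hash choices and apply the Azuma-Hoeffding inequality to the associated Doob martingale. Let $m = n/4$ be the number of home blocks and fix the insertion order of the $n$ keys. Let $Y_i \in [m]$ be the home block of the $i$-th key, and assume the $Y_i$ are independent and uniform; for $\mathcal{H}_q$ this is justified as in the earlier max-load claim, via negative association. Define
\[
  F(Y_1,\dots,Y_n) = \sum_{b=1}^{m} O_b, \qquad O_b = \max\bigl(0,\, x_b - 4 + \lfloor x_b/6 \rfloor\bigr),
\]
where $x_b = |\{i : Y_i = b\}|$. Set $X_t = \mathbb{E}[F \mid Y_1,\dots,Y_t]$. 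Then $X_0 = \mathbb{E}[F] = (n/4)\mathbb{E}[O]$, and $X_n = F$.

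\textbf{Bounded differences.} Changing a single $Y_i$ moves one ball from block $b$ to block $b'$. This lowers $x_b$ by one and raises $x_{b'}$ by one. The map $x \mapsto \max(0, x - 4 + \lfloor x/6 \rfloor)$ increases by at most $2$ per unit step in $x$, since each step adds $1$ from the linear term plus at most $1$ from the floor. Hence $|F(Y) - F(Y')| \le 2$ whenever $Y$ and $Y'$ differ in one coordinate, and the martingale differences satisfy $|X_t - X_{t-1}| \le c = 2$.

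\textbf{Concentration.} Azuma-Hoeffding then gives
\[
  \Pr\bigl[\,|F - \mathbb{E}F| \ge \lambda\bigr] \le 2\exp\!\left(-\frac{\lambda^2}{8n}\right).
\]
Taking $\lambda = \sqrt{8 c' n \ln n}$ makes the failure probability $2n^{-c'}$, i.e.\ w.h.p. The deviation is therefore $O(\sqrt{n\log n}) = o(n)$. Relative to $\mathbb{E}F \approx 0.249n$ this is a vanishing fraction, so a dereference table with $(0.249+\varepsilon)n$ slots suffices w.h.p. for any constant $\varepsilon>0$ and large $n$.

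\textbf{Main obstacle.} The hardest step is the bounded-difference bound combined with the independence assumption. Under $\mathcal{H}_q$ the block indices are not truly independent: keys that share a remainder $r$ are forced into distinct blocks. I would first argue as in the earlier claim, using negative association or a fully random $h$ applied to distinct remainders. Alternatively, I would handle the Feistel structure by revealing one hash value $h(r)$ per distinct remainder. Each revelation moves at most the $2^{w-\ell}$ keys with that remainder, which inflates $c$ by at most a constant factor when $w-\ell$ is small, and in any case keeps $c$ independent of $n$. A secondary point is that the Poisson approximation enters only through $X_0$, not through the concentration argument itself.
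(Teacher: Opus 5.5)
Your argument is the paper's proof: a Doob martingale over the $n$ home-block choices, a Lipschitz constant of $2$, and Azuma--Hoeffding with $\lambda=\Theta(\sqrt{n\log n})$, all under the same idealized assumption of independent uniform block choices. You also justify the constant $2$ more explicitly than the paper, via the increments of $x\mapsto\max(0,x-4+\lfloor x/6\rfloor)$ lying in $[0,2]$. Your side remarks on removing the independence assumption do not hold up as written, for two reasons: negative association does not provide the independence that the bounded-difference step for the Doob martingale actually uses, and in the Feistel alternative a single value $h(r)$ relocates every key sharing remainder $r$, up to one key per prefix value (as many as $m=n/4$ keys, not $2^{w-\ell}$), so $c$ is not independent of $n$.
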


\begin{proof}[Proof]
    Let $X_i$ be the home block identifier for the $i$-th tuple, and let $S = f(X_1, \ldots, X_n)= \sum_{i=1}^{n/4} O_i$ denote the total overflow size across all home blocks.
    Consider the Doob martingale $M_t = \mathbb{E}[S \mid X_1,\ldots,X_t]$ for $t=0,\ldots,n$.
    Revealing one tuple can change the total overflow by at most two units, hence the martingale differences are bounded: $|M_t - M_{t-1}| \le 2$ almost surely.
    By Azuma-Hoeffding, for any $\lambda>0$,
    \[
        \Pr\bigl[\,S - \mathbb{E}[S] \ge \lambda\bigr] \le \exp \Bigl( -\, \tfrac{\lambda^2}{2 n \cdot 2^2} \Bigr).
    \]
    Choosing $\lambda = 2\sqrt{2n \ln n}$ gives a tail at most $1/n$.
    For practical dataset sizes (e.g., $n\ge 10^5$), this deviation is below 1\% of the dataset size.
    Therefore, the dereference table size exhibits strong concentration around its expectation w.h.p.
\end{proof}

\rev{Adding up all the space overhead required for fault tolerance, the space usage of \httwo for $n$ tuples with $w$-bit keys and values is $2nw + 0.22 n\bigl(2w-\log n\bigr)/(1-\delta)$ bits, where $1-\delta$ is the load factor of the dereference table.\mymarginpar{R2D3}
This analysis shows that \httwo provides robust protection against overflow events while maintaining excellent space efficiency.}

%Note that this proof supports deletions and reinsertions because their changes on the Doob martingale are also bounded by two.

\section{Concurrency and Resizing}\label{sec:conc_resize}

\iffalse
\xilin{
\htabbr provides convincing practical operational adaptability. We first describe the concurrency control mechanism used in \htone and \httwo, and then describe our scalable resizing framework and duplicate keys handling.
}
\fi

\subsection{Concurrency}\label{subsec:concurrency}

We employ a fine-grained optimistic concurrency control protocol to support highly parallel operations.
The fundamental mechanism relies on version numbers to ensure atomicity and consistency at the quotient-group granularity.
For \httwo (Figure~\ref{fig:flatten}), the version number is embedded in the header of each cache line, while for \htone it is stored alongside the head-array pointers.

Specifically, we employ \defn{sequence locks} (seqlocks), a common lock primitive used for example in the Linux kernel~\cite{linux-seqlock-doc,linux-seqlock-source}.
Each group maintains a monotonically increasing integer version: even indicates no writer, odd indicates a write in progress.
A writer acquires a write lock on the group by atomically incrementing the version (even to odd), performs its updates, and releases by incrementing again (odd to even).
Readers use optimistic validation: read version $v_0$; if $v_0$ is odd, retry; read the data; read version $v_1$; accept the read only if $v_0 = v_1$, otherwise retry.
%The writer's first increment serves as acquire and the second as release, ensuring readers observe a consistent snapshot.
%This fine-grained, group-level versioning minimizes contention, and the retry-based approach for queries avoids blocking, thus imposing minimal overhead.
A similar concurrency control mechanism is applied to the dereference table, where version numbers are associated with individual bins.

The version counter can wrap around after sufficiently many updates, so there is some risk of the ABA problem.
We use a configurable width version counter to balance space usage and safety.
In our experiments we use a 8-bit counter, which suffices to make wraparound negligible in practice for our workloads.

\subsection{Resizing}\label{subsec:resizing}

\mymarginpar{R4W1}
\mymarginpar{R4D3}
\mymarginpar{R4D4}
\rev{We also provide a scalable resizing framework for \htabbr.
However, resizing is an option that can be enabled for certain applications, rather than a fallback when allocation failure occurs.
Our theory and experiments (Section~\ref{subsec:failure-rate} and \ref{subsec:load_factor}) show that failures occur extremely rarely at reasonable load factors.
As in some prior hash-table designs, in the case of a failure, the system will return an error rather than initiating a resize~\cite{libcuckoo,10.1145/3588727}.
}

Our resizing framework avoids stop-the-world stalls through a key strategy: collaborative, in-flight resizing.
Instead of being blocked, threads accessing a partition during its resize detect the operation and join as workers.
This partition-based approach amortizes growth costs by resizing partitions individually when their load factor exceeds a tunable threshold.
When a resize is triggered, an initiating thread coordinates the migration by dividing it into fine-grained strides, which are then collaboratively processed by the coordinator and any joining threads.
This dynamic distribution of work significantly accelerates the process.
The configurable load threshold and growth factor also enable a direct trade-off between space efficiency and operation latency.

%\alex{I think this is the right story. Let's spell out this tradeoff here. As we do so, we should also specify that if the application knows the upper bound for the size of the table, then better space efficiency is possible. We should also compare to prior resizing (straight doubling), since we do substantially better in the worst case.}

\iffalse
Algorithm~\ref{alg:resizing} outlines this logic, and the implementation is rather complicated.
An incoming operation first checks the partition's state, joining an in-progress resize (\texttt{HelpResize}) or initiating a new one (\texttt{StartResize}) if the load threshold is met.
The \texttt{HelpResize} procedure shows how threads become workers, atomically claiming and executing migration strides.
\texttt{StartResize} details the coordinator's role, from allocating a new table to performing the final atomic pointer swap after all work is done.
This ensures the workload is shared and threads are almost never stalled.
\fi

An additional technique to optimize worst-case space efficiency is \defn{staggering}.
Given a resizing factor $c$ and base capacity $b$, we initialize $k$ partitions with different capacities $b \cdot c^{i/k}$ for $i = 0, 1, \ldots, k-1$.
This approach prevents the scenario where nearly all partitions resize simultaneously, which would result in significant worst-case unused space.
With a doubling resizing factor and 0.7 resizing threshold, staggering achieves a worst-case space efficiency of 43.9\%, compared to 35.6\% for native partitioning.

\section{Evaluation}\label{sec:evaluation}

\begin{figure*}[h]
    \centering
    \pgfplotslegendfromname{throughput-legend-horizontal}\vspace{-1em}
    % Single row: Load, Run A, Run B, Run C, Run A^-, Run B^-, Run C^-
    \generateSubfigure{17}{Load}{fill_throughput}{0}%
    \generateSubfigureNoYLabel{17}{Run A}{run_throughput}{0}%
    \generateSubfigureNoYLabel{18}{Run B}{run_throughput}{0}%
    \generateSubfigureNoYLabel{19}{Run C}{run_throughput}{0}%
    \generateSubfigureNoYLabel{20}{Run A$^-$}{run_throughput}{0}%
    \generateSubfigureNoYLabel{21}{Run B$^-$}{run_throughput}{0}%
    \generateSubfigureNoYLabel{22}{Run C$^-$}{run_throughput}{0}%
    \generateSubfigureNoYLabel{28}{Run X}{run_throughput}{0}
    \caption{Performance of hash tables on YCSB workloads with 16 threads. (Throughput is Million ops/second).}
    \label{fig:throughput_subfigures}
\end{figure*}

% Local helper: add all object plots for a given case_id
% #1 = case_id
\newcommand{\AddAllObjectPlotsForCase}[1]{%
    % \htthree (object_id=6)
    \addplot[CuckooLineStyle] table[
            x=space_efficiency,
            y=throughput_millions,
            restrict expr to domain={\thisrow{object_id}}{6:6},
            restrict expr to domain={\thisrow{case_id}}{#1:#1}
        ] {\throughputdata};

    % \htfour (object_id=7)
    \addplot[IcebergLineStyle] table[
            x=space_efficiency,
            y=throughput_millions,
            restrict expr to domain={\thisrow{object_id}}{7:7},
            restrict expr to domain={\thisrow{case_id}}{#1:#1}
        ] {\throughputdata};

    % \htfive (object_id=15)
    \addplot[JunctionLineStyle] table[
            x=space_efficiency,
            y=throughput_millions,
            restrict expr to domain={\thisrow{object_id}}{15:15},
            restrict expr to domain={\thisrow{case_id}}{#1:#1}
        ] {\throughputdata};

    % \htsix (object_id=24)
    \addplot[TBBLineStyle] table[
            x=space_efficiency,
            y=throughput_millions,
            restrict expr to domain={\thisrow{object_id}}{24:24},
            restrict expr to domain={\thisrow{case_id}}{#1:#1}
        ] {\throughputdata};

    % \htone (object_id=17)
    \addplot[TPHTLineStyle] table[
            x=space_efficiency,
            y=throughput_millions,
            restrict expr to domain={\thisrow{object_id}}{17:17},
            restrict expr to domain={\thisrow{case_id}}{#1:#1}
        ] {\throughputdata};

    % \httwo (object_id=23)
    \addplot[BlastLineStyle] table[
            x=space_efficiency,
            y=throughput_millions,
            restrict expr to domain={\thisrow{object_id}}{23:23},
            restrict expr to domain={\thisrow{case_id}}{#1:#1}
        ] {\throughputdata};
}

% Common axis style for all space efficiency plots
\pgfplotsset{
    spaceeff axis/.style={
            width=6cm,
            height=3.4cm,
            xlabel={Space Efficiency},
            xmin=0,
            ymin=0,
            grid=major,
            grid style={gray!30},
            tick label style={font=\small},
            label style={font=\small},
            ylabel style={xshift=-5pt},
            scaled ticks=true,
            tick label style={/pgf/number format/fixed,/pgf/number format/precision=2},
            % ytick distance=10,
            minor tick num=1
        },
    spaceeff axis with ylabel/.style={
            spaceeff axis,
            ylabel={Throughput (M/s)}
        }
}

% Helper: create a subplot for a given case
% #1 = case_id, #2 = position (e.g., [b]{0.45\textwidth}), #3 = extra options for legend
\newcommand{\CreateSpaceEffSubplot}[3]{%
    \begin{subfigure}#2
        \centering
        \begin{tikzpicture}
            \begin{axis}[spaceeff axis, #3]
                \AddAllObjectPlotsForCase{#1}
            \end{axis}
        \end{tikzpicture}
        \caption{Case #1}
    \end{subfigure}%
}

\begin{figure*}[h]
    \centering
    % Legend above figures in 1x5 layout
    {\pgfplotslegendfromname{spaceeff-legend}\vspace{-0.5em}}

    % \vspace{0.5cm}

    % Case 1 with legend definition and y-label
    \begin{subfigure}[b]{0.35\textwidth}
        \centering
        \begin{tikzpicture}
        \begin{axis}[spaceeff axis with ylabel, legend to name={spaceeff-legend}, legend entries={\htthree, \htfour, \htfive, \htsix, \htone, \httwo}, legend cell align=left, legend style={draw=none, font=\small, legend columns=6, /tikz/every even column/.append style={column sep=0.5cm}}]
                \AddAllObjectPlotsForCase{1}
            \end{axis}
        \end{tikzpicture}
        \caption{Insertion}\label{fig:throughput_space_efficiency/insertion}
    \end{subfigure}%
    % Case 9 (no y-label)
    \begin{subfigure}[b]{0.32\textwidth}
        \centering
        \begin{tikzpicture}
            \begin{axis}[spaceeff axis]
                \AddAllObjectPlotsForCase{9}
            \end{axis}
        \end{tikzpicture}
        \caption{Positive Query}\label{fig:throughput_space_efficiency/positive_query}
    \end{subfigure}%
    % Case 10 (no y-label)
    \begin{subfigure}[b]{0.32\textwidth}
        \centering
        \begin{tikzpicture}
            \begin{axis}[spaceeff axis]
                \AddAllObjectPlotsForCase{10}
            \end{axis}
        \end{tikzpicture}
        \caption{Negative Query}\label{fig:throughput_space_efficiency/negative_query}
    \end{subfigure}

    \caption{Throughput-space efficiency tradeoff across insertion and query workloads. Each curve shows how throughput and space efficiency vary with load factor, with rightmost points indicating maximum achievable space efficiency.}\label{fig:throughput_space_efficiency_cases}
\end{figure*}

    We evaluate the performance of tiny pointer hash tables, \htone and \httwo, against state-of-the-art \rev{general} hash tables, {\htthree}~\cite{libcuckoo}, {\htfour}~\cite{10.1145/3588727}, {\htfive}~\cite{junction}, and {\htsix}~\cite{tbb}, \rev{as well as state-of-the-art compact hash tables, Compact Bucketing~\cite{10.1007/s00453-022-00996-y}, Cleary~\cite{10.1109/TC.1984.1676499}, and Layered~\cite{poyias2017mbonsaipracticalcompactdynamic}}.
    \textsf{libcuckoo} implements classical cuckoo hashing with high-performance concurrency control.
    \htfour is a cache-aware hash table that leverages the recent iceberg hashing lemma for improved performance and memory efficiency.
    \htfive provides an efficient implementation of canonical linear probing.
    \htsix is a production separate-chaining hash table from Intel TBB.
    We chose these to cover the major families: separate chaining (\htsix), cuckoo (\htthree), linear probing (\htfive), and cache-aware design (\htfour).
    \rev{For compact hash tables, \htb and \htg are bucketing variants, \htcp uses bidirectional linear probing, \htlp builds layered design upon linear probing, and \htcs \htls are sparse variants.}

Our evaluation addresses the following key questions:
\begin{enumerate}
    \item Do \htabbr{}s deliver leading throughput against state-of-the-art baselines? (Section~\ref{subsec:throughput})
    \item Where do they sit on the speed-space Pareto frontier: how much memory is saved at equal throughput, and what throughput is retained at fixed space efficiency? (Section~\ref{subsec:speed_space_tradeoff})
    \rev{\item How does the performance and space usage of \htone compare to prior compact hash tables? (Section~\ref{subsec:compact_ht_comparison})}
    \item How robust is their performance under scale---with increasing dataset size and thread count? (Section~\ref{subsec:scalability})
    \item How effective is our resizing scheme? Does it avoid blocking the world during resizing? (Section~\ref{subsec:expr_resizing})
    \item Do the core primitives meet their theoretical promises: high dereference-table load factors without failures and near-Poisson occupancy from the hash family under diverse key distributions? (Section~\ref{subsec:load_factor}, Section~\ref{subsec:hash_function_analysis})
\end{enumerate}

\subsection{Experimental Setup}\label{subsec:experimental_setup}

\paragraph{Environment}
All experiments were conducted on a single-socket x86\_64 server running Linux kernel 5.15.0--151-generic.
The machine is equipped with an Intel Xeon Silver 4314 CPU (16 cores, 32 threads; 2.40 GHz base, up to 3.40 GHz) and 128~GiB of RAM, with a 24~MiB shared L3 cache and a single NUMA node.

\paragraph{Datasets}
Following previous work~\cite{10.1145/3588727,10.1145/3725424}, we use the hash-table variant of YCSB~\cite{10.1145/1807128.1807152,indexmicrobench} to evaluate the real-world performance of the proposed hash tables.
YCSB proceeds in two phases: a \defn{load phase} consisting of insertions and a \defn{run phase} consisting of a mix of positive reads and insertions.
Following prior work, we use run phases A, B, and C, which consist of 50\% reads/50\% updates, 90\% reads/10\% updates and 100\% reads, respectively.
We also define phases $\text{A}^-$, $\text{B}^-$, and $\text{C}^-$, which mirror $\text{A}$, $\text{B}$, and $\text{C}$ but use negative rather than positive queries, and phase $\text{X}$, which performs deletions.
In the load phase, we insert 64M tuples into the hash table.
In the run phases A, A$^-$, B, and B$^-$, we insert the same number of tuples, and change the number of queries accordingly.
For C, C$^-$, we query 64M tuples, and for X, we delete 64M tuples.
We also run microbenchmarks over a dataset consisting of uniformly distributed keys.

\paragraph{Scope}
\rev{
    We evaluate all hash tables with 64b keys and 64b values.
    Our benchmarks do not include duplicate keys.\footnote{\rev{Duplicate-key support is compatible with \htabbr designs.
    \htone requires no modification to support duplicate keys, since the hash table key is distinct from the identifier used for tiny-pointer dereference.
    \httwo can be extended to support duplicate keys with light structural changes.}}\mymarginpar{R4D2}
}

\subsection{YCSB}\label{subsec:throughput}

To measure overall performance, we run YCSB with 16 threads.
We initialize all tables with similar initial capacity (except \htsix, which allocates lazily) and terminate at 70\% space efficiency (50\% for Run X), defined as the ratio of inserted data to total allocated memory. The throughput is shown in Figure~\ref{fig:throughput_subfigures}.

\httwo has the highest average throughput with \httwoavgfillthroughput~M ops/s for load phase and \httwoavgrunthroughput~M ops/s for run phases, which is \httwofillspeedup~$\times$ and \httworunspeedup~$\times$ of the fastest baseline, respectively. \htone maintains competitive performance: \htoneavgfillthroughput~M ops/s load phase and \htoneavgrunthroughput~M ops/s run phase, which are \htonefillspeeduppercent\% and \htonerunspeeduppercent\% of the fastest baseline, respectively.

\htthree maintains balanced performance across all workloads (\htthreeavgfillthroughput~M ops/s load phase, \htthreeavgrunthroughput~M ops/s run phase).
This stems from the cuckoo design: each key corresponds to two locations, and probes at most two cachelines per query, leading to predictable access patterns regardless of workload mix.
\htfour uses a front yard/backyard design from recent theory literature~\cite{10.1145/3625817}.
The fingerprint layer of \htfour brings both pros and cons: it adds an extra level of indirection, but also allows for filtering out non-matching keys early.
It run-phase throughput is \htfouroverhtthreerunthroughputpercent\% higher than \htthree.
\htfive is a linear probing hash table with strong positive query performance, \htfiveruncthroughput~M ops/s throughput on Run C, but its complicated concurrency control mechanism for insertions as well as the global load monitoring mechanism leads to contention and poor performance on insertion and mixed workloads.
\htsix maintains stable but lower throughput compared to other baselines.
This likely stems from its separate-chaining design, which provides less cache locality, and RAII accessors that hold locks for the duration of access.
We note that \htsix lacks support for safe multi-threaded deletions, so there are no results for Run X.

\begin{takeaway}
    \httwo runs the fastest on real workloads, while \htone maintains competitive performance despite prioritizing space efficiency.
\end{takeaway}

\ifrevisionmode
{
    \pgfplotsset{
        every axis/.append style={
                % make all plotted curves/bars blue
                every axis plot/.append style={blue},
                % (optional) also make labels/ticks blue:
                xlabel style={text=blue},
                ylabel style={text=blue},
                ticklabel style={text=blue},
                % every axis legend/.append style={text=blue},
            },
    }
    \pgfplotsset{
    spaceeff compact axis/.style={
            width=6cm,
            height=3.4cm,
            xlabel={Space Efficiency},
            xmin=0,
            ymin=0,
            grid=major,
            grid style={gray!30},
            tick label style={font=\small},
            label style={font=\small},
            ylabel style={xshift=-5pt},
            scaled ticks=true,
            tick label style={/pgf/number format/fixed,/pgf/number format/precision=2},
            minor tick num=1
        },
    spaceeff compact axis with ylabel/.style={
            spaceeff compact axis,
            ylabel={Throughput (M/s)}
        }
}

\providecommand{\PlotCompactObj}[4]{% style, obj_id, case_id, data
    \addplot[#1] table[
            x=space_efficiency,
            y=throughput_millions,
            restrict expr to domain={\thisrow{object_id}}{#2:#2},
            restrict expr to domain={\thisrow{case_id}}{#3:#3}
        ] {#4};
}

% Plot all compact objects for a given case_id; optional flag to exclude obj26.
\providecommand{\AddAllCompactPlotsForCase}[1]{%
    % Always plot these
    \PlotCompactObj{TPHTLineStyle}{4}{#1}{\throughputcompactdata}%
    \PlotCompactObj{BlastLineStyle}{23}{#1}{\throughputcompactdata}%
    \PlotCompactObj{CuckooLineStyle}{26}{#1}{\throughputcompactdata}%
    \PlotCompactObj{IcebergLineStyle}{27}{#1}{\throughputcompactdata}%
    \PlotCompactObj{JunctionLineStyle}{28}{#1}{\throughputcompactdata}%
    \PlotCompactObj{TBBLineStyle}{29}{#1}{\throughputcompactdata}%
    \PlotCompactObj{BackupLineStyleOne}{30}{#1}{\throughputcompactdata}%
    \PlotCompactObj{BackupLineStyleTwo}{31}{#1}{\throughputcompactdata}%
}

\begin{figure*}[h]
    \centering
    {\pgfplotslegendfromname{spaceeff-legend-compact}\vspace{-0.5em}}

    \begin{subfigure}[b]{0.35\linewidth}
        \centering
        \begin{tikzpicture}
            \begin{axis}[spaceeff compact axis with ylabel, legend to name={spaceeff-legend-compact}, legend entries={\htone, \httwo, \htb, \htg, \htcp, \htcs, \htlp, \htls}, legend cell align=left, legend style={draw=none, font=\small, legend columns=8, /tikz/every even column/.append style={column sep=0.1cm}}]
                \AddAllCompactPlotsForCase{1}
            \end{axis}
        \end{tikzpicture}
        \caption{Insertion}\label{fig:throughput_space_efficiency_compact/insertion}
    \end{subfigure}%
    \begin{subfigure}[b]{0.32\linewidth}
        \centering
        \begin{tikzpicture}
            \begin{axis}[spaceeff compact axis]
                \AddAllCompactPlotsForCase{9}
            \end{axis}
        \end{tikzpicture}
        \caption{Positive Query}\label{fig:throughput_space_efficiency_compact/positive_query}
    \end{subfigure}%
    \begin{subfigure}[b]{0.32\linewidth}
        \centering
        \begin{tikzpicture}
            \begin{axis}[spaceeff compact axis]
                \AddAllCompactPlotsForCase{10}
            \end{axis}
        \end{tikzpicture}
        \caption{Negative Query}\label{fig:throughput_space_efficiency_compact/negative_query}
    \end{subfigure}

    \caption{Throughput-space efficiency tradeoff for compact hash tables.}\label{fig:throughput_space_efficiency_compact_cases}
\end{figure*}

}
\else
    
\fi

% Helper to map case id to label
\newcommand{\getScalingCaseName}[1]{%
  \ifnum#1=1 Insertion\fi%
  \ifnum#1=3 Deletion\fi%
  \ifnum#1=9 Positive Query\fi%
  \ifnum#1=10 Negative Query\fi%
}

% Helper to add all 6 objects with consistent styles
\newcommand{\AddAllObjectsForDataSize}[1]{%
  % \htthree (6)
  \addplot[CuckooLineStyle] table[
      x=table_size,
      y expr=\thisrow{throughput (ops/s)}/1000000,
      restrict expr to domain={\thisrow{case_id}}{#1:#1},
      restrict expr to domain={\thisrow{object_id}}{6:6}
    ] {\datasizedata};
  % \htfour (7)
  \addplot[IcebergLineStyle] table[
      x=table_size,
      y expr=\thisrow{throughput (ops/s)}/1000000,
      restrict expr to domain={\thisrow{case_id}}{#1:#1},
      restrict expr to domain={\thisrow{object_id}}{7:7}
    ] {\datasizedata};
  % \htfive (15)
  \addplot[JunctionLineStyle] table[
      x=table_size,
      y expr=\thisrow{throughput (ops/s)}/1000000,
      restrict expr to domain={\thisrow{case_id}}{#1:#1},
      restrict expr to domain={\thisrow{object_id}}{15:15}
    ] {\datasizedata};
  % \htsix (24)
  \addplot[TBBLineStyle] table[
      x=table_size,
      y expr=\thisrow{throughput (ops/s)}/1000000,
      restrict expr to domain={\thisrow{case_id}}{#1:#1},
      restrict expr to domain={\thisrow{object_id}}{24:24}
    ] {\datasizedata};
  % \htone (17)
  \addplot[TPHTLineStyle] table[
      x=table_size,
      y expr=\thisrow{throughput (ops/s)}/1000000,
      restrict expr to domain={\thisrow{case_id}}{#1:#1},
      restrict expr to domain={\thisrow{object_id}}{17:17}
    ] {\datasizedata};
  % \httwo (20)
  \addplot[BlastLineStyle] table[
      x=table_size,
      y expr=\thisrow{throughput (ops/s)}/1000000,
      restrict expr to domain={\thisrow{case_id}}{#1:#1},
      restrict expr to domain={\thisrow{object_id}}{20:20}
    ] {\datasizedata};
}

% Local axis-label/tick shifts for this figure only
\pgfplotsset{
  data size local shifts/.style={
    xticklabel style={yshift=-1pt},
    yticklabel style={xshift=-1pt},
    xlabel style={yshift=-1pt},
    ylabel style={yshift=4pt, xshift=-5pt}
  }
}

% Axis break zigzag macro (draws on x-axis between positions 0 and 1)
% \newcommand{\DrawAxisBreakZigzag}{%
%   \draw[white, line width=4pt] (axis cs:0.5,0) +(-6pt,-4pt) rectangle +(6pt,4pt);
%   \draw[thick] (axis cs:0.5,0) +(-5pt,-3pt) -- +(-2pt,3pt) -- +(2pt,-3pt) -- +(5pt,3pt);
% }
% % Axis break zigzag macro
\newcommand{\DrawAxisBreakZigzag}{%
  % 1. Reduced eraser width from 8pt to 5pt to match the zigzag exactly
  \fill[white] (axis cs:0.5,0) +(-3pt,-3pt) rectangle +(3pt,3pt);
  % 2. The zigzag (unchanged)
  \draw (axis cs:0.5,0) +(-3pt, 0pt) -- +(-1pt,2pt) -- +(1pt,-2pt) -- +(3pt,0pt);
}

% Shared legend (reuse scaling legend styles)
\begin{figure*}[h]
  \centering
  {\pgfplotslegendfromname{spaceeff-legend}\vspace{-0.8em}}
  % Case 1
  \begin{subfigure}[b]{0.27\textwidth}
    \tiny
    \centering
    \begin{tikzpicture}
      \begin{axis}[
          data size local shifts,
          clip=false,
          width=5cm,
          height=3.2cm,
          xlabel={Dataset Size (MB)},
          ylabel={Throughput (M/s)},
          ymin=0,
          xmin=-0.5,
          xmax=4.5,
          xtick={0,1,2,3,4},
          xticklabels={{1/32},4,32,256,2048},
          x filter/.expression={
            (x < 100000 ? 0 :
            (x < 1000000 ? 1 :
            (x < 10000000 ? 2 :
            (x < 100000000 ? 3 : 4))))
          },
          grid=major,
          grid style={gray!30},
          tick label style={font=\small},
          label style={font=\small},
          scaled ticks=true,
          tick label style={/pgf/number format/fixed,/pgf/number format/precision=1}
        ]
        \AddAllObjectsForDataSize{1}
        \DrawAxisBreakZigzag
      \end{axis}
    \end{tikzpicture}
    \caption{\getScalingCaseName{1}}
  \end{subfigure}%
  % Case 3
  \begin{subfigure}[b]{0.24\textwidth}
    \tiny
    \centering
    \begin{tikzpicture}
      \begin{axis}[
          data size local shifts,
          clip=false,
          width=5cm,
          height=3.2cm,
          xlabel={Dataset Size (MB)},
          ymin=0,
          xmin=-0.5,
          xmax=4.5,
          xtick={0,1,2,3,4},
          xticklabels={{1/32},4,32,256,2048},
          x filter/.expression={
            (x < 100000 ? 0 :
            (x < 1000000 ? 1 :
            (x < 10000000 ? 2 :
            (x < 100000000 ? 3 : 4))))
          },
          grid=major,
          grid style={gray!30},
          tick label style={font=\small},
          label style={font=\small},
          scaled ticks=true,
          tick label style={/pgf/number format/fixed,/pgf/number format/precision=1}
        ]
        \AddAllObjectsForDataSize{3}
        \DrawAxisBreakZigzag
      \end{axis}
    \end{tikzpicture}
    \caption{\getScalingCaseName{3}}
  \end{subfigure}%
  % Case 6
  \begin{subfigure}[b]{0.24\textwidth}
    \tiny
    \centering
    \begin{tikzpicture}
      \begin{axis}[
          data size local shifts,
          clip=false,
          width=5cm,
          height=3.2cm,
          xlabel={Dataset Size (MB)},
          ymin=0,
          xmin=-0.5,
          xmax=4.5,
          xtick={0,1,2,3,4},
          xticklabels={{1/32},4,32,256,2048},
          x filter/.expression={
            (x < 100000 ? 0 :
            (x < 1000000 ? 1 :
            (x < 10000000 ? 2 :
            (x < 100000000 ? 3 : 4))))
          },
          grid=major,
          grid style={gray!30},
          tick label style={font=\small},
          label style={font=\small},
          scaled ticks=true,
          tick label style={/pgf/number format/fixed,/pgf/number format/precision=1}
        ]
        \AddAllObjectsForDataSize{9}
        \DrawAxisBreakZigzag
      \end{axis}
    \end{tikzpicture}
    \caption{\getScalingCaseName{9}}
  \end{subfigure}%
  % Case 7
  \begin{subfigure}[b]{0.24\textwidth}
    \tiny
    \centering
    \begin{tikzpicture}
      \begin{axis}[
          data size local shifts,
          clip=false,
          width=5cm,
          height=3.2cm,
          xlabel={Dataset Size (MB)},
          ymin=0,
          xmin=-0.5,
          xmax=4.5,
          xtick={0,1,2,3,4},
          xticklabels={{1/32},4,32,256,2048},
          x filter/.expression={
            (x < 100000 ? 0 :
            (x < 1000000 ? 1 :
            (x < 10000000 ? 2 :
            (x < 100000000 ? 3 : 4))))
          },
          grid=major,
          grid style={gray!30},
          tick label style={font=\small},
          label style={font=\small},
          scaled ticks=true,
          tick label style={/pgf/number format/fixed,/pgf/number format/precision=1}
        ]
        \AddAllObjectsForDataSize{10}
        \DrawAxisBreakZigzag
      \end{axis}
    \end{tikzpicture}
    \caption{\getScalingCaseName{10}}
  \end{subfigure}%
  \caption{Performance scaling analysis for hash tables with increasing dataset size, single-threaded.}\label{fig:data_size_scaling}
\end{figure*}

\subsection{Speed/Space Tradeoff}\label{subsec:speed_space_tradeoff}

We define \defn{space efficiency} as the ratio of the data size to maximum memory allocated during execution.
Initializing each hash table's capacity to 16M tuples and inserting data in 5\% increments, we record the memory footprint until reaching 99\% capacity or a failure or resizing occurs.
%For \htthree and \htfive, which use open addressing and resize when insertion fails or a hard-coded load factor is reached, we exclude those points to ensure fair comparison.
%As suggested by the analysis in Section~\ref{subsec:failure-rate}, \htabbr{}s do not fail in these experiments.
We pin to one physical core to ensure stable throughput and memory footprint measurements.
Figures~\ref{fig:throughput_space_efficiency/insertion},~\ref{fig:throughput_space_efficiency/positive_query}, and~\ref{fig:throughput_space_efficiency/negative_query} show the throughput for insertion, positive query, and negative query, respectively.

\htone demonstrates the highest space efficiency of \htonemaxspaceefficiencypercent\%, which shaves memory usage from \htonememshavelowerpercent\% to \htonememshaveupperpercent\% compared with non-tiny-pointer-based baselines.

\httwo achieves the second-highest space efficiency of \httwomaxspaceefficiencypercent\%, has dominating \httwoinsertionspeeddifffivezeropct\% faster insertion performance and \httwonegativequeryspeeddifffivezeropct\% faster negative query performance of the best baseline at 50\% space efficiency and more when space efficiency is above 50\%.
It has the second highest positive query performance at 50\% space efficiency, \httwopositivequerypercentoffastestfivezeropct\% of the best baseline, and becomes the fastest when space efficiency is above 70\%.

\htfive (linear probing) excels at low load: at a 0.05 load factor it achieves \htfiveoverhttwolowloadratio$\times$ \httwo’s positive-query throughput.
Its performance, however, drops sharply---by \htfivethroughputdroppercent\%---as the load factor rises to 0.7.
For comparison, \htone, \httwo, and \htthree drop by \htonethroughputdroppercent\%, \httwothroughputdroppercent\%, and \htthreethroughputdroppercent\%, respectively, whereas \htfour remains flatter with only \htfourthroughputdroppercent\% decline.
This stems from \htfour's small fallback backyard, which becomes increasingly cache-friendly as fallbacks accumulate, partially offsetting other overheads.
With \htsix, insertion and negative-query throughput increases as more data are loaded.
\htsix allocates buckets lazily upon first pointer dereference; loading more data forces allocation of more initial buckets, which raises throughput.

\begin{takeaway}
    \htone achieves above 100\% space efficiency with competitive performance.
    \httwo has better space efficiency than the baselines with competitive-to-faster operations and high throughput even at load factors where the baselines fail.
\end{takeaway}

\subsubsection{Comparison with compact hash tables}\label{subsec:compact_ht_comparison}
\rev{
    In Figure~\ref{fig:throughput_space_efficiency_compact_cases}, we conduct the same experiments as Section~\ref{subsec:speed_space_tradeoff} with compact hash tables.
    Unless a specific operation is indicated, reported throughput is averaged over all data points and operations (insert, query).
    Bucketing methods (\htb and \htg)~\cite{10.1007/s00453-022-00996-y} use frequent bucket reallocations to maintain high space efficiency (up to \compactbucketingmaxspaceefficiencypercent\%).
    However, this results in very slow insertions (\htone is \compactbucketinghtonespeedup$\times$ faster).\mymarginpar{R2D2}
    \htcp~\cite{10.1109/TC.1984.1676499} uses bidirectional linear probing and uses an extra displacement field for each slot to recover the key hash.
    It achieves high space efficiency (\htcpmaxspaceefficiencypercent\%), but still lower throughput than \htone (by \htonebyhtcppercent\%).
    Using \htcs with its sparse layout doesn't result in either high space efficiency or performance.
    \htlp~\cite{poyias2017mbonsaipracticalcompactdynamic} uses a similar linear probing strategy and stores displacement information across multiple levels.
    However, its space efficiency is \htlpmaxspaceefficiencypercent\%, which means its natural comparison point is \httwo rather than \htone.
    However, \httwo has \httwooverhtlpspeedup$\times$ its throughput.
    \htls with sparse layout exchanges \htlsperfdropvshtlppercent\% performance for \htlsspacegainvshtlppercent\% more space efficiency.
}

\subsection{Scalability}\label{subsec:scalability}

\subsubsection{Scaling with dataset size}\label{subsec:dataset_scaling}

\begin{figure*}[h]
    \centering
    \begingroup
    \makeatletter
    \renewenvironment{figure}[1][]{\begin{minipage}{\linewidth}\centering\def\@captype{figure}}{\end{minipage}}
    \makeatother

    \begin{minipage}[b]{0.32\textwidth}
        \centering
        \pgfplotslegendfromname{throughput-legend-3column}\par
    \end{minipage}\hfill%
    \begin{minipage}[b]{0.66\textwidth}
        \centering
        \pgfplotslegendfromname{resizing-two-legend}\par
    \end{minipage}

    \begin{minipage}[b]{0.32\linewidth}
        \begin{figure}[h]
    \centering
    \resizebox{\linewidth}{!}{%
    \begin{tikzpicture}
        \begin{axis}[
                width=8.5cm,
                height=4cm,
                ylabel={Throughput (M/s)},
                ybar,
                bar width=3pt,
                enlarge x limits=0.15,
                % Use numeric x, but show your symbolic labels:
                xtick={0,1,2,3,4},
                xticklabels={Load,{Run A},{Run B},{Run C},{Run C$^-$}},
                % symbolic x coords={Load, {Run A}, {Run B}, {Run C}, {Run C$^-$}},
                axis lines=box,
                tick align=inside,
                xtick style={draw=none},
                scaled ticks=true,
                tick label style={font=\small, /pgf/number format/fixed,/pgf/number format/precision=1},
                ymajorgrids=true,
                yminorgrids=true,
                minor tick num=1,
                max space between ticks=35pt,
                try min ticks=5,
                grid style={gray!30},
                enlarge y limits={upper,value=0.3},
                ymin=0,
                nodes near coords,
                nodes near coords align={vertical},
                nodes near coords style={/pgf/number format/fixed, /pgf/number format/precision=0, font=\tiny, rotate=90, anchor=west, yshift=0.5pt, xshift=0.3pt},
                legend entries = {\htthree, \htfour, \htfive, \htsix, \htone, \httwo},
                legend cell align = left,
                legend style={draw=none, font=\small, legend columns=3, /tikz/every even column/.append style={column sep=0.1cm}},
                legend to name={throughput-legend-3column},
                unbounded coords=discard,
                filter discard warning=false,
                label style={font=\small}
            ]

            \addResizingPlots;

        \end{axis}
    \end{tikzpicture}%
    }

    \caption{Performance of hash tables on YCSB workloads with 16 threads (resizing enabled).}\label{fig:throughput_subfigures_resizing}
\end{figure}
    \end{minipage}\hfill
    \begin{minipage}[b]{0.32\linewidth}
        \begin{figure}[h]
    \centering
    \resizebox{\linewidth}{!}{%
    \begin{tikzpicture}
        \begin{axis}[
                width=8.5cm,
                height=3.8cm,
                xlabel={Window},
                ylabel={Throughput (M/s)},
                xmin=0, xmax=100,
                ymin=0,
                grid=both,
                grid style={gray!30},
                legend style={draw=none, font=\small, legend cell align=left, column sep=2pt, /tikz/every even column/.append style={column sep=10pt}},
                legend columns=3,
                legend to name={resizing-two-legend},
                tick label style={font=\small},
                label style={font=\small}
            ]

            \addplot[CuckooLineStyle, mark=none, sharp plot]
            table[x=window, y expr=\thisrow{throughput (ops/s)}/1e6, restrict expr to domain={\thisrow{object_id}}{6:6}] {\progressiveresslidingwindowdata};
            \addlegendentry{\htthree}

            \addplot[IcebergLineStyle, mark=none, sharp plot]
            table[x=window, y expr=\thisrow{throughput (ops/s)}/1e6, restrict expr to domain={\thisrow{object_id}}{7:7}] {\progressiveresslidingwindowdata};
            \addlegendentry{\htfour}

            \addplot[JunctionLineStyle, mark=none, sharp plot]
            table[x=window, y expr=\thisrow{throughput (ops/s)}/1e6, restrict expr to domain={\thisrow{object_id}}{15:15}] {\progressiveresslidingwindowdata};
            \addlegendentry{\htfive}

            \addplot[TBBLineStyle, mark=none, sharp plot]
            table[x=window, y expr=\thisrow{throughput (ops/s)}/1e6, restrict expr to domain={\thisrow{object_id}}{24:24}] {\progressiveresslidingwindowdata};
            \addlegendentry{\htsix}

            \addplot[TPHTLineStyle, mark=none, sharp plot]
            table[x=window, y expr=\thisrow{throughput (ops/s)}/1e6, restrict expr to domain={\thisrow{object_id}}{18:18}] {\progressiveresslidingwindowdata};
            \addlegendentry{\htone}

            \addplot[BlastLineStyle, mark=none, sharp plot]
            table[x=window, y expr=\thisrow{throughput (ops/s)}/1e6, restrict expr to domain={\thisrow{object_id}}{21:21}] {\progressiveresslidingwindowdata};
            \addlegendentry{\httwo}

            \addplot[TPHTLineStyle, mark=none, sharp plot, dashed]
            table[x=window, y expr=\thisrow{throughput (ops/s)}/1e6, restrict expr to domain={\thisrow{object_id}}{25:25}] {\progressiveresslidingwindowdata};

        \end{axis}
    \end{tikzpicture}%
    }
    \caption{Sliding window insertion throughput across windows. The dashed line indicates staggering enabled.}\label{fig:progressive_resizing_sliding_window_throughput}
\end{figure}
    \end{minipage}\hfill
    \begin{minipage}[b]{0.32\linewidth}
        % Resizing RSS plot: reads csv/resizing_rss.csv and plots completion vs memory
\begin{figure}[h]
    \centering
    \resizebox{\linewidth}{!}{%
    \begin{tikzpicture}
        \begin{axis}[
                width=8.5cm,
                height=3.8cm,
                xlabel={Completion Percentage},
                ylabel={Memory Usage (MB)},
                xmin=0, xmax=100,
                ymin=0,
                grid=both,
                grid style={gray!30},
                tick label style={font=\small},
                label style={font=\small}
            ]

            % List of object ids to plot; adjust if needed
            % Each line filters rows by object_id and plots as a line

            % Object 6 = Cuckoo (htthree)
            \addplot[CuckooLineStyle, mark=none, sharp plot]
            table[x=completion, y=memory_mb, restrict expr to domain={\thisrow{object_id}}{6:6}] {\resizingrssdata};
            \addlegendentry{\htthree}

            % Object 7 = IcebergHT (htfour)
            \addplot[IcebergLineStyle, mark=none, sharp plot]
            table[x=completion, y=memory_mb, restrict expr to domain={\thisrow{object_id}}{7:7}] {\resizingrssdata};
            \addlegendentry{\htfour}

            % Object 15 = Junction (htfive)
            \addplot[JunctionLineStyle, mark=none, sharp plot]
            table[x=completion, y=memory_mb, restrict expr to domain={\thisrow{object_id}}{15:15}] {\resizingrssdata};
            \addlegendentry{\htfive}

            % Object 24 = TBB (htsix)
            \addplot[TBBLineStyle, mark=none, sharp plot]
            table[x=completion, y=memory_mb, restrict expr to domain={\thisrow{object_id}}{24:24}] {\resizingrssdata};
            \addlegendentry{\htsix}

            % Object 18 = Chained-TPHT (htone)
            \addplot[TPHTLineStyle, mark=none, sharp plot]
            table[x=completion, y=memory_mb, restrict expr to domain={\thisrow{object_id}}{18:18}] {\resizingrssdata};
            \addlegendentry{\htone}

            % Object 21 = Flattened-TPHT (httwo)
            \addplot[BlastLineStyle, mark=none, sharp plot]
            table[x=completion, y=memory_mb, restrict expr to domain={\thisrow{object_id}}{21:21}] {\resizingrssdata};
            \addlegendentry{\httwo}

            % Object 25 = Stagger ByteArray Chained HT (htseven)
            \addplot[TPHTLineStyle, mark=none, sharp plot, dashed]
            table[x=completion, y=memory_mb, restrict expr to domain={\thisrow{object_id}}{25:25}] {\resizingrssdata};
            
            \legend{}
            
        \end{axis}
    \end{tikzpicture}%
    }
    \caption{Memory usage during resizing. The dashed line indicates staggering enabled.}\label{fig:resizing_rss_plot}
\end{figure}
    \end{minipage}
    \endgroup
\end{figure*}

\rev{Figure~\ref{fig:data_size_scaling} shows single-threaded throughput
    across dataset sizes from $2^{11}$ to $2^{27}$ on a uniformly distributed
    microbenchmark.
    When the dataset fits entirely in L1 cache ($2^{11}$ entries), \httwo and \htone achieve average throughputs of \httwoscalingsmallavgthroughput~and \htonescalingsmallavgthroughput~M~ops/s across all operations---\httwoscalingsmallavgspeedupoverbest$\times$ and \htonescalingsmallavgspeedupoverbest$\times$ the best baseline, respectively. In particular, \htone dominates negative queries at \htonesmallnegquerythroughput~M~ops/s, as its compact head array keeps lookups L1-resident.
    As the dataset grows beyond cache capacity, throughput decreases across all
    systems; \httwo sustains its lead even at the largest scale, averaging
    \httwoscalinglargeavgthroughput~M/s at 2\,GB---\httwoscalinglargeavgspeedupoverbest$\times$
    the best baseline.
    However, \htfive's deletion outperforms all other systems here because
    a linear probing table treats deletions similarly to positive queries, and since this workload is a single deletion pass, the tombstone mechanism is not stressed.}\mymarginpar{R1D7}

\begin{takeaway}
    \htone and \httwo maintain consistent and expected behavior across different dataset sizes.
    The tiny pointer head array of \htone enables better cache utilization for small table sizes.
\end{takeaway}

\subsubsection{Scaling with threads}\label{subsec:thread_scaling}

We evaluate scalability with 1--16 threads on a microbenchmark with 64M keys.
Most hash tables exhibit nearly linear scaling behavior across all four operation types.
Specifically, the scaling ratios for \htone and \httwo are \htonethreadscalingfactor{} and \httwothreadscalingfactor, respectively.
%\httwo scales effectively due to group-level versioning, which restricts contention to a single cache line.
% Junction's insertion does not exhibit full scalability, which is attributed to a contention during insertion operations.

\begin{takeaway}
    Both tiny pointer hash tables scale near linearly with increasing thread count.
\end{takeaway}

\subsection{Resizing}\label{subsec:expr_resizing}

\subsubsection{Throughput}\label{subsec:resizing_throughput}

Figure~\ref{fig:throughput_subfigures_resizing} analyzes the efficacy of our resizing scheme from Section~\ref{subsec:resizing}.
Unlike Section~\ref{subsec:throughput}, each hash table resizes once during the load phase and once during the run phase (when insertion operations are present).
\httwo achieves \httworesizingavgrunthroughput M ops/s average throughput, ranking the best among all methods.
However, because the insertion of {\htabbr}s carries all the resizing protocol, the load phase throughput decreases by \httworesizingfillthroughputdecreasepercent\% compared to non-resizing tests (Section~\ref{subsec:throughput}).
On the contrary, \htfour exhibits the smallest performance degradation, with a \htfourresizingthroughputdecreasepercent\% throughput decrease due to its proprietary in-place resizing optimization.

\subsubsection{\rev{Progressive resizing throughput}}\label{subsec:resizing_progressive_throughput}

\rev{
    Figure~\ref{fig:progressive_resizing_sliding_window_throughput} shows the throughput on 100 consecutive tumbling windows over multiple resizings.
    \mymarginpar{R4D5}
    % For \httwo, the average throughput drop after resizing is \httworesizingthroughputdroppercent\%, and it recovers quickly.
    For \httwo, throughput drops by \httworesizingthroughputdroppercent\% from immediately after a resize to just before the next resize (after further inserts).
    \htone's throughput tends to be more stable, consistent with evaluation in Section~\ref{subsec:speed_space_tradeoff}.
    \htfour's low-performance period is prolonged as dataset grows, which we attribute to its progressive resizing strategy.
    Moreover, despite the appearance in the figure, staggered resizing in \htone affects overall throughput by less than 10\%.
}

\rev{
    Meanwhile, with resizing enabled under YCSB Run A, the 99th-percentile insertion latency of \httwo and \htone is 541~ns and 778~ns, respectively---nearly an order of magnitude below competitors (e.g., \htthree at 2.7~$\mu$s).
}

\subsubsection{Memory footprint}\label{subsec:resizing_memory_footprint}

We initialize the hash tables with $2^{24}$ capacity and set the double-sizing resizing threshold to 0.7 (except for \htthree, which resizes when insertion fails).
Then we continuously insert $0.6\times 2^{27}$ tuples and show the memory footprint (\texttt{VmHWM}) of the hash tables during the whole resizing process in Figure~\ref{fig:resizing_rss_plot}.
Attributed to the staggering technique in Section~\ref{subsec:resizing}, the footprint of \htone grows steadily instead of stumbling like other baselines.
On the other hand, although \httwo doesn't use staggering, it still has a small memory footprint because of its compact design.
\htone's worst-case space efficiency with staggering is \htsevenworstresizingspaceefficiencypercent\%, while \htfour with the best worst-case space efficiency among baselines is only \htfourworstresizingspaceefficiencypercent\%.

\iffalse
\subsubsection{Percentile latency}\label{subsec:resizing_percentile_latency}

Table~\ref{tab:latency_percentiles} presents percentile latencies for insertion and positive query operations during YCSB Run A execution.
The 99\% percentile latency of \httwo and \htone for both operations is order-of-magnitude lower than competing methods.
For maximum tail latency in positive queries, all baselines are comparable except \htthree, as it's the only one with query-blocking resize.
\htabbr's performance stems from the asynchronous concurrency design of query operations.
\fi

\begin{takeaway}
    The resizing policy simultaneously achieves the goals of high throughput, good space efficiency, and low tail latency.
\end{takeaway}

\subsection{Dereference Table Capacity}\label{subsec:load_factor}

\mymarginpar{R4D3}
\mymarginpar{R4D4}
\mymarginpar{R4W1}
% \xilinrewrite{we configure the dereference table constructed in Section~\ref{subsec:tpimpl} with different bin sizes and show its maximal load factor before failure.}
\rev{
    In order to show that the dereference table implementation (Section~\ref{subsec:tpimpl}) can support high load factors under a sustained workload, we instantiate dereference tables with different bin size and perform sequences of operations.
    We either perform random insertions until allocation failure or alternating random insertions and deletions ($100\times$ the table capacity) at a fixed load factor.
    The results are shown in Figure~\ref{fig:prob_load_mini}.
}
%\xilin{In Fig~\ref{fig:prob_load_mini}, we configure the dereference table (Section~\ref{subsec:tpimpl}) with varying bin sizes.
%Crucially, we want to show it won't uncontrollably trigger failure and fallback (resizing) under reasonable load factor. To do so, we report a table-level load threshold, not a per-allocation failure probability: failures become likely only near the extreme-load point, while under slightly lower threshold we observe sustained failure-free behavior. Recall that the table uses fixed-size bins and two-choice allocation; a failure occurs only when both candidate bins are full.}
%~As discussed in Section~\ref{sec:htone}, the dereference table is the core structure holding the data, and we are able to achieve the expected space efficiency as long as it works well.
% \xilinrewrite{In this experiment, we run two workloads: insertion-only or alternating insertions and deletions. We increase the load factor until an insertion fails.}
%\xilin{We use two stress tests: (i) insertion-only, where load increases until the first failure, and (ii) alternating insert/delete updates, where we first fill to a target load and then run random updates for $100\times$ table capacity while maintaining that load.}
%
As expected, the supported load factor increases with larger bin sizes.
When the bin size reaches $2^7-1$, the dereference table supports 8-bit tiny pointers at load factors \insertiononlyloadfactorpercent\% and \deletionincludedloadfactorpercent\% for insertion-only and alternating workloads, respectively.
%These high load factors are sufficient to achieve superior space efficiency for the entire structure.
%The results confirm previous theoretical analysis that $\log{\delta^{-1}}$-bit tiny pointers can support $1-\delta$ load factors.

\begin{takeaway}
    Our dereference table supports high load factors.
\end{takeaway}

\subsection{Hash Function Analysis}\label{subsec:hash_function_analysis}

To evaluate the quality of the hash function family designed in Section~\ref{subsec:hashfunc}, we examine the relative frequency of bin occupancy.
We use random keys, sequential keys, low Hamming weight keys, or high Hamming weight keys, and hash 64\,M keys to 64\,M bins 100 times.
The results are shown in Fig~\ref{fig:prob_occ_mini}.

Compare against the Poisson distribution ($\lambda = 1$),
the random, low and high Hamming weight keys all closely approximate the Poisson distribution, as expected.
%Both low and high Hamming weight keys achieve similar occupancy distributions to random keys, showcasing the resistance to bit-level dependency.
%The bins that are empty and those holding only one key take \emptybinspercent\% and \onekeybinspercent\% of the total occupancy, respectively, complying with the theory.
%Their whiskers are at most \maxwhiskerdiffpercent\% different from the mean.
%Only when the occupancy reaches 9 does the relative frequency start to deviate from the Poisson approximation, and the frequency is already negligible below $10^{-7}$.
Sequential keys on the other hand are completely evenly distributed, which is also as expected.
It demonstrates lack of independence, and the negative association of this dependence (see Section~\ref{subsec:hashfunc}).
Thus though the hash function is not independent, it is only dependent in that keys have better than random spread.

\begin{takeaway}
    The hash function family designed in Section~\ref{subsec:hashfunc} shows excellent key-distribution properties.
\end{takeaway}

\section{Related Work}

Research on hash table design has produced a wide range of efficient and specialized implementations.
There has been recent work on open-addressing hash tables, such as Horton Tables~\cite{196284}, Concurrent Robin Hood Hashing~\cite{kelly2018concurrentrobinhoodhashing}, GrowTable~\cite{10.1145/3309206}, and DySECT~\cite{DySECT}.
Zombie Hashing~\cite{10.1145/3725424}, following new theory~\cite{Bender22Linear}, inserts tombstones to improve primary clustering.
DHASH implements non-blocking concurrency~\cite{2022ITPDS..33.3274W}.
VIP hashing adapts online to skewed key popularity~\cite{vip_hashing}.

More work has been done on hardware-aware optimizations for hash tables, such as leveraging SIMD instructions~\cite{swisstables,folly}, FPGAs~\cite{10.1145/3132747.3132756,10.1145/2629582}, GPUs~\cite{10.14778/2809974.2809984,10.1145/3514221.3517911,rabl2023vectorized}, PMEM and external memory~\cite{10.14778/3446095.3446101,li2024read,10.1145/3471485.3471506, 10.14778/3551793.3551839,10.1145/3514221.3517884,Kurpicz2022PaCHashPA,288625}, cache-line awareness~\cite{10.1145/2694344.2694359}, and prefetching techniques~\cite{dramhit,katsarakis2024dlht}.

Besides recent theoretical advances on hash table design~\cite{10.1145/3519935.3519969,farachcolton2025optimalboundsopenaddressing}, adjacent work has also been done on perfect hashing~\cite{10.1145/828.1884,10.5555/2394893.2394911}, filtering~\cite{10.1145/3448016.3452841,10.14778/3523210.3523211}, succinct data structures~\cite{10.1145/1216370.1216372,10.5555/1759210.1759247}, history-independence~\cite{attiya2025historyindependent}, Xarray~\cite{10.1145/3627703.3629588}, balls into bins~\cite{10.5555/646975.711521}, and balanced allocations~\cite{10.1137/1.9781611977073.74,10.1145/3323165.3323203}.

Finally, the broader ecosystem of data structures and systems highlights the versatility of quotient-based and hash-based methods.
These principles underpin a range of high-performance applications such as key-value stores~\cite{10.5555/2616448.2616488,10.1145/2043556.2043558,10.1145/3183713.3196898}, memory caching systems~\cite{memcached,10.5555/2482626.2482662}, and join operators~\cite{10.1145/3677134,10.1145/233269.233337}.

\noindent
\begin{minipage}[t]{0.48\linewidth}
    \centering
    % Left: Load Factor (blue) with shading and markers
    \begin{tikzpicture}
        \begin{axis}[
                width=\linewidth,
                height=3cm,
                xlabel={Bin Size},
                ylabel={\shortstack{Load Factor\\ Support (\%)}},
                xmode=log,
                xmin=2,
                xmax=190,
                ymin=0,
                ymax=105,
                ytick={0,20,40,60,80,100},
                grid=major,
                grid style={gray!30},
                legend style={draw=none, font=\footnotesize, at={(0.5,1.02)}, anchor=south},
                legend columns=1,
                tick label style={font=\small},
                label style={font=\small},
                scaled ticks=true,
                xtick={3,7,15,31,63,127},
                xticklabels={$2^2$,$2^3$,$2^4$,$2^5$,$2^6$,$2^7$}
            ]

            % Legend entries
            \addlegendimage{InsertionOnlyStyle, mark=o, mark size=2.5pt}
            \addlegendentry{Insertion Only}
            \addlegendimage{WithDeletionStyle, mark=square, mark size=2.5pt}
            \addlegendentry{With Deletion}

            % Insertion (CSV) shaded area and line with markers
            \addplot[name path=insLineMini, draw=none] table[
                    x=bin_size,
                    y index=3,
                    col sep=comma,
                    restrict expr to domain={\thisrow{object_id}}{4:4}
                ] {\loadfactordata};
            \addplot[name path=insBaseMini, draw=none] coordinates {(3,0) (127,0)};
            \addplot[fill=exp1!25, fill opacity=0.35, draw=none, forget plot] fill between[of=insLineMini and insBaseMini];
            \addplot[InsertionOnlyStyle, mark=o, mark size=2.5pt] table[
                    x=bin_size,
                    y index=3,
                    col sep=comma,
                    restrict expr to domain={\thisrow{object_id}}{4:4}
                ] {\loadfactordata};

            % Deletion (hardcoded) shaded area and line with markers
            \addplot[name path=delLineMini, draw=none] coordinates {
                    (3,7)
                    (7,43)
                    (15,71)
                    (31,85)
                    (63,91)
                    (127,95)
                };
            \addplot[name path=delBaseMini, draw=none] coordinates {(3,0) (127,0)};
            \addplot[fill=poisson!25, fill opacity=0.35, draw=none, forget plot] fill between[of=delLineMini and delBaseMini];
            \addplot[WithDeletionStyle, mark=square, mark size=2.5pt] coordinates {
                    (3,7)
                    (7,43)
                    (15,71)
                    (31,85)
                    (63,91)
                    (127,95)
                };
        \end{axis}
    \end{tikzpicture}

    \captionof{figure}{Load factor supported by dereference tables (Section~\ref{subsec:tpimpl}).}\label{fig:prob_load_mini}
\end{minipage}%
\hfill%
\begin{minipage}[t]{0.48\linewidth}
    \centering

    % Right: Occupancy (log scale) with boxplots and shaded Poisson
    \begin{tikzpicture}
        \begin{axis}[
                width=\linewidth,
                height=3cm,
                xlabel={Bin Occupancy},
                ylabel={\shortstack{Relative\\ Frequency}},
                ymode=log,
                ymin=1e-9,
                ymax=2.5,
                ytick={1e-1, 1e-3, 1e-5, 1e-7, 1e-9},
                xmin=0,
                xmax=11.5,
                xtick={0,1,3,5,7,9,11},
                xticklabels={0,1,3,5,7,9,11},
                grid=major,
                grid style={gray!30},
                legend style={draw=none, font=\footnotesize, at={(0.4,1.02)}, anchor=south, legend image post style={scale=0.5}},
                legend columns=2,
                legend cell align=left,
                tick label style={font=\small},
                label style={font=\small}
            ]

            % Legend entries
            \addlegendimage{area legend, draw=exp1, fill=exp1!25}
            \addlegendentry{Random}
            \addlegendimage{area legend, draw=exp2, fill=exp2!25}
            \addlegendentry{Sequential}
            \addlegendimage{area legend, draw=exp3, fill=exp3!25}
            \addlegendentry{Low Hamming}
            \addlegendimage{area legend, draw=exp4, fill=exp4!25}
            \addlegendentry{High Hamming}

            % Box plots 0..11 with all key sets
            \foreach \i in {0,1,2,3,4,5,6,7,8,9,10,11} {
                % Random keys (leftmost)
                \pgfplotstablegetelem{\i}{lower_whisker}\of{\occboxrandom}\edef\lw{\pgfplotsretval}
                \pgfplotstablegetelem{\i}{lower}\of{\occboxrandom}\edef\lq{\pgfplotsretval}
                \pgfplotstablegetelem{\i}{median}\of{\occboxrandom}\edef\med{\pgfplotsretval}
                \pgfplotstablegetelem{\i}{upper}\of{\occboxrandom}\edef\uq{\pgfplotsretval}
                \pgfplotstablegetelem{\i}{upper_whisker}\of{\occboxrandom}\edef\uw{\pgfplotsretval}
                \pgfmathsetmacro{\pos}{\i - 0.3}
                \edef\boxopts{boxplot prepared={draw position=\pos, lower whisker=\lw, lower quartile=\lq, median=\med, upper quartile=\uq, upper whisker=\uw}, boxplot/draw direction=y, fill=exp1!25, draw=exp1}
                \expandafter\addplot\expandafter+\expandafter[\boxopts] coordinates {};
                
                % Sequential keys
                \pgfplotstablegetelem{\i}{lower_whisker}\of{\occboxsequential}\edef\lw{\pgfplotsretval}
                \pgfplotstablegetelem{\i}{lower}\of{\occboxsequential}\edef\lq{\pgfplotsretval}
                \pgfplotstablegetelem{\i}{median}\of{\occboxsequential}\edef\med{\pgfplotsretval}
                \pgfplotstablegetelem{\i}{upper}\of{\occboxsequential}\edef\uq{\pgfplotsretval}
                \pgfplotstablegetelem{\i}{upper_whisker}\of{\occboxsequential}\edef\uw{\pgfplotsretval}
                \pgfmathsetmacro{\pos}{\i - 0.1}
                \edef\boxopts{boxplot prepared={draw position=\pos, lower whisker=\lw, lower quartile=\lq, median=\med, upper quartile=\uq, upper whisker=\uw}, boxplot/draw direction=y, fill=exp2!25, draw=exp2, very thick}
                \expandafter\addplot\expandafter+\expandafter[\boxopts] coordinates {};
                
                % Low Hamming keys
                \pgfplotstablegetelem{\i}{lower_whisker}\of{\occboxlowhamming}\edef\lw{\pgfplotsretval}
                \pgfplotstablegetelem{\i}{lower}\of{\occboxlowhamming}\edef\lq{\pgfplotsretval}
                \pgfplotstablegetelem{\i}{median}\of{\occboxlowhamming}\edef\med{\pgfplotsretval}
                \pgfplotstablegetelem{\i}{upper}\of{\occboxlowhamming}\edef\uq{\pgfplotsretval}
                \pgfplotstablegetelem{\i}{upper_whisker}\of{\occboxlowhamming}\edef\uw{\pgfplotsretval}
                \pgfmathsetmacro{\pos}{\i + 0.1}
                \edef\boxopts{boxplot prepared={draw position=\pos, lower whisker=\lw, lower quartile=\lq, median=\med, upper quartile=\uq, upper whisker=\uw}, boxplot/draw direction=y, fill=exp3!25, draw=exp3}
                \expandafter\addplot\expandafter+\expandafter[\boxopts] coordinates {};
                
                % High Hamming keys (rightmost)
                \pgfplotstablegetelem{\i}{lower_whisker}\of{\occboxhighhamming}\edef\lw{\pgfplotsretval}
                \pgfplotstablegetelem{\i}{lower}\of{\occboxhighhamming}\edef\lq{\pgfplotsretval}
                \pgfplotstablegetelem{\i}{median}\of{\occboxhighhamming}\edef\med{\pgfplotsretval}
                \pgfplotstablegetelem{\i}{upper}\of{\occboxhighhamming}\edef\uq{\pgfplotsretval}
                \pgfplotstablegetelem{\i}{upper_whisker}\of{\occboxhighhamming}\edef\uw{\pgfplotsretval}
                \pgfmathsetmacro{\pos}{\i + 0.3}
                \edef\boxopts{boxplot prepared={draw position=\pos, lower whisker=\lw, lower quartile=\lq, median=\med, upper quartile=\uq, upper whisker=\uw}, boxplot/draw direction=y, fill=exp4!25, draw=exp4}
                \expandafter\addplot\expandafter+\expandafter[\boxopts] coordinates {};
            }

            % Shaded Poisson under curve and line on top
            \addplot[fill=poisson!25, fill opacity=0.35, draw=none, forget plot] table[
                    x=x,
                    y=y
                ] {\occpois} |- (axis cs:10,1.1e-9) -- (axis cs:0,1.1e-9) -- cycle;
            \addplot[PoissonTheoryStyle, mark=none] table[
                    x=x,
                    y=y
                ] {\occpois};
            \node[
                anchor=north east,
                font=\tiny,
                fill=white,
                fill opacity=0.8,
                text opacity=1,
                inner sep=1.5pt
            ] at (rel axis cs:0.98,0.98) {\tikz{\draw[color=poisson, thick] (0,0) -- (1,0);}~Poisson($\lambda=1$)};
        \end{axis}
    \end{tikzpicture}

    \captionof{figure}{Bin occupancy compared with smoothed Poisson ($\lambda=1$).}\label{fig:prob_occ_mini}
\end{minipage}
\section{Conclusion}\label{sec:conclusion}

We introduced Tiny Pointer Hash Tables, showing that \emph{tiny pointers} and \emph{key quotienting}---two ideas from theory---can be engineered into production-ready hash tables.
\htone prioritizes memory use and is, to our knowledge, the first practical \rev{succinct} hash table.
\httwo prioritizes latency, placing the common case within a single cache miss and achieving \httwomaxspaceefficiencypercent\% space efficiency with up to \httwospeeduppercent\% higher throughput than strong baselines.
Together, these advances move the latency-space Pareto frontier forward for hash tables.
Beyond raw numbers, TPHT supports deletions, online resizing without global pauses, and clean integration with 64-bit keys/values---making the approach directly usable in systems.
In future work, we see opportunities to adapt the tiny-pointer/dereference-table machinery to other pointer-heavy data structures and to explore adaptive quotienting and layout choices for different workloads.

\begin{acks}
Alex Conway and William Kuszmaul were supported by NSF Award CNS-2504470;
William Kuszmaul was also supported by Jane Street.
\end{acks}

%%
%% The next two lines define the bibliography style to be used, and
%% the bibliography file.
\bibliographystyle{ACM-Reference-Format}
\bibliography{bibliography}

\end{document}
\endinput
%%
%% End of file `sample-sigconf.tex'.